\DeclareMathOperator*{\argmax}{arg\,max}
\DeclareMathOperator*{\argmin}{arg\,min}
\newcommand{\BlackBox}{\rule{1.5ex}{1.5ex}}  
\newenvironment{proof}{\par\noindent{\bf Proof\
}}{\hfill\BlackBox\\[2mm]}
\newtheorem{lemma}{\bf{Lemma}}
\newtheorem{remark}{\bf{Remark}}
\newtheorem{assumption}{\bf{Assumption}}
\newcommand{\be}{\begin{equation}}
\newcommand{\ee}{\end{equation}}
\newcommand{\bea}{\begin{eqnarray*}}
\newcommand{\eea}{\end{eqnarray*}}
\def\Lddots{\mathinner{\mkern1mu\raise17\p@\vbox{\kern17\p@\hbox{.}}\mkern2mu
    \raise8\p@\hbox{.}\mkern2mu\raise\p@\hbox{.}\mkern1mu}}
\outer\def\subsect#1\par{\vskip12pt
plus.07\vsize\penalty-250\vskip0pt plus-.07\vsize
\bigskip\vskip\parskip\message{#1}
\vbox{\smash{\lower9pt\hbox{\kern-8pt\epsfbox{shadedbox.eps}}}}\vskip-\baselineskip
\leftline{\large\bf#1}\nobreak\medskip}
\def\BibTeX{{\rmfamily B\kern-.05em{\scshape i\kern-.025em b}\kern-.08em \TeX}}
\newcommand{\A}{\mathcal{A}}
\newcommand{\cB}{\mathcal{B}}
\newcommand{\C}{\mathcal{C}}
\newcommand{\E}{\mathcal{E}}
\newcommand{\N}{\mathcal{N}}
\newcommand{\bs}{\mathbf{s}}
\newcommand{\bb}{\mathbf{b}}
\newcommand{\bdelta}{\mbox{\boldmath $\delta$}}
\newcommand{\bgamma}{\mbox{\boldmath $\gamma$}}
\newcommand{\bomega}{\mbox{\boldmath $\omega$}}
\newcommand*{\coloneq}{\mathrel{\rlap{%
                     \raisebox{0.3ex}{$\m@th\cdot$}}%
                     \raisebox{-0.3ex}{$\m@th\cdot$}}%
                     =}
\renewcommand{\nomgroup}[1]{
 \ifthenelse{\equal{#1}{V}}{}{
 \ifthenelse{\equal{#1}{S}}{\item[List of Symbols]}{}}
}
\begin{document}

\title{An Active Sequential Xampling\\ 
Detector for Spectrum Sensing}

\author{Lorenzo Ferrari, and Anna Scaglione
\IEEEcompsocitemizethanks{\IEEEcompsocthanksitem L. Ferrari and A. Scaglione \{Lorenzo.Ferrari,Anna.Scaglione@asu.edu\} are with the School of Electrical, Computer and Energy Engineering, Arizona State University, Tempe,
AZ, 85281, USA.\protect
}
\thanks{This work was supported in part by the National Science Foundation CCF-247896 grant.}}
\maketitle
\IEEEpeerreviewmaketitle
\begin{abstract}\color{black}
Nyquist sampling at very high carriers can be prohibitively costly for low power wireless devices. In spectrum sensing, this limit calls for an analog front-end that can sweep different bands quickly, in order to use the available spectrum opportunistically.
In this paper we propose a new sub-Nyquist analog front-end and a sensing strategy formulated as a sequential utility optimization problem. The sensing action maximizes a utility function decreasing linearly with the number of measurements, and increasing as a function of the active spectrum components that are correctly detected. The optimization selects the best linear combinations of sub-bands to mix, in order to accrue maximum utility. 
The structure of the utility represents the trade-off between exploration, exploitation and risk of making an error, that is characteristic of the spectrum sensing problem. 
We first present the analog front-end architecture, and then map the measurement model into the abstract optimization problem proposed, and analyzed, in the remainder of the paper. 
We characterize the optimal policy, under constraints on the sensing matrix, and derive the approximation factor of the greedy approach we introduce to solve the problem. 
Numerical simulations showcase the benefits of combining active spectrum sensing with sub-Nyquist sampling.
\end{abstract}
\section{Introduction}
\label{sec:introduction}
{\color{black}
Mobile devices are increasingly being used for entertainment (gaming, video streaming): the coexistence of these new services with the ``Internet of Things'' (IoT) and Machine-to-Machine (M2M) communications means that wireless applications may quickly become starved for bandwidth. 
Millimeter wave bands can provide the much needed linear increase in throughput, but pose the challenge of high speed sampling, required to sense the spectrum, for many 
relatively low rate IoT applications, which are likely to benefit from opportunistic decentralized spectrum access.
 In order to achieve efficient usage of the spectrum, when a large portion is potentially available, we need to overcome the bottleneck of Nyquist sampling, which can be prohibitive, especially for low power wireless devices.

The literature offers two classes of solutions to eliminate this bottleneck: a static sub-Nyquist 
sampling front-end or a tunable narrowband coupled with active sensing strategies. 
\par The static sub-Nyquist sampling approach is based on the representation of the signal as a nonlinear Union of Subspaces (UoS) \cite{lu2008theory}.
A common framework to cover several acquisition and reconstruction approaches under the umbrella of the UoS model was defined in \cite{mishali2011xampling}, which named these analog to digital conversion techniques {\it Xampling architectures}. 
Xampling architectures preprocess the signal in the analog domain, and then sample at a lower rate compared to what the Nyquist theorem dictates.
The aim is to reduce the complexity and energy cost for the Analog-to-Digital Converter (ADC) hardware.
The downside is the increased complexity in the reconstruction of the underlying signal.
Our work focuses on {\it multiband signals}, whose UoS representation is a {\it finite} union of subspaces with {\it infinite}, but countable dimensions, in the space spanned by orthogonal sinc functions.
Examples of Xampling architectures for multiband signals are in e.g.  \cite{fudge2008nyquist,mishali2009blind,mishali2010theory,venkataramani2000perfect}.
There are other Compressive Spectrum Sensing (CSS) algorithms in the literature that are related. Typically, they start directly from a discrete time model (see e.g. \cite{zeng2011distributed,zeinalkhani2012iterative}) where the receiver has a fixed number of measurements, forming an underdetermined system of equations, whose solution is a sparse vector with support equal to the spectrum occupancy.

The second set of approaches is fully adaptive and consists in selecting opportunistically, and in a cognitive fashion, a small section of the spectrum at a time, relying on an analog front-end able to swiftly switch between small sub-bands. The problem choosing optimally the band to explore was studied by several authors, see e.g. \cite{zhao2008myopic,unnikrishnan2010algorithms,wang2012optimality} and our previous work \cite{ferrari2017utility} that also focuses on single band tests. Note that tests that alias the spectrum create correlation in the resources, and this is why our formulation departs from the Partially Observable Markov Decision Processes (POMDPs) models used in previous works for single band tests.
More recently, other authors have proposed and studied sensing strategies that would reduce the required number of CSS measurements to the sparsity of the signal. The work in \cite{michelusi2015cross} 
has a predetermined sensing phase duration, while \cite{ma2017sparsity} introduces an analog preprocessing which guarantees satisfactory detection performance, irrespective of the sparsity of the signal, as long as the measurement phase is proportional to the average occupancy. 
Our contribution and motivation is discussed next.
\subsection{Contributions}
Our main contribution is a new dynamic optimization framework married with the design of the active sub-Nyquist spectrum sensing frontend.  

We provide two arguments to study alternatives to Xampling (or to the aforementioned CSS methods) 
previously studied. 
First, in the spectrum sensing problem, the objective is the detection of the idle channels, not the signal reconstruction: this suggests that the Xampling complexity may still exceed what is really necessary for this task, as previously discussed in \cite{cohen2014sub}.
Additionally, most of the standard results in Compressive Sensing (CS), that bound the $\ell_2$-norm of the estimation error, do not directly express the detection performance.  
The architecture studied in this paper has the advantage of being sequential, requiring incoherent observations and being robust to time inaccuracies in the sampling hardware, as opposed to e.g. the multi-coset approach in \cite{venkataramani2000perfect}. 
For the spectrum sensing detection problem, the additive noise at the receiver plays an important role in the performance of interest. Hence, rather than focusing on reconstruction in {\it noiseless} scenarios, in this paper we directly tackle the so called {\it noise folding} problem in the design \cite{arias2011noise}. 
Noise folding gives a Signal-to-Noise Ratio (SNR) deterioration approximately linear in the number of bands that are aliased prior to sampling \cite{arias2011noise}. This can cause poor performance for several Xampling approaches at low SNR.
As discussed in \cite{baron2010bayesian}, low density measurement matrices represent an effective countermeasure to noise folding.
Additionally sparse matrices enable belief propagation techniques (i.e. message passing) for signal recovery (or detection in our context) that lead to state of the art performance, in spite of the poor conditioning of the sensing matrices. 

Second, to the best of our knowledge, receivers in the Xampling family that are in the literature (see e.g. \cite{fudge2008nyquist,mishali2009blind,mishali2010theory,venkataramani2000perfect,lexa2011compressive,ariananda2012compressive,cohen2014sub}) do not fall in an {\it active sensing framework}, as they are not married to an online optimization of the measurement strategy. Hence, that can have a  limitation: if the spectrum is not sufficiently sparse, neither the signal reconstruction, nor the detection of its presence in a certain band, can be accurate, even in the absence of noise. 
In fact, for general non-sparse signals, in a {\it noiseless} setting, \cite{cohen2014sub} proved that half the Nyquist rate is necessary (see also \cite{lexa2011compressive, ariananda2012compressive} for related discussions).

In contrast to Xampling architectures used for multi-band signals (e.g. the Modulated Wideband Converter (MWC) in \cite{mishali2010theory}, the Multirate Asynchronous Sub-Nyquist Sampler (MASS) in \cite{sun2012wideband} and also \cite{saeed_restless}), our CSS sequential architecture only needs a single mixer, with a programmable analog waveform and Low Pass Filter (LPF).
In our optimization framework, the Cognitive Receiver (CR) needs to select what group of  sub-bands (generally non-contiguous) to sense at each test. A test corresponds to a sample, obtained after folding different sub-bands of the wide-spectrum signal.
Compared to other multi-band signals receivers, the hardware in our architecture is simpler, since we use a single non-coherent receiver and a single sampling device that 
collects energy measurements sequentially, sampling at a fraction of the Nyquist rate.
 We use a time-dependent utility function to optimize the trade-off between sensing and exploitation.

It is important to remark that, similarly to \cite{zhao2008myopic,unnikrishnan2010algorithms,wang2012optimality,ferrari2017utility}  and \cite{saeed_restless},  the optimum action will not generally attempt the full recovery of all the white spaces. In fact, the optimum decision may be conservative and sense a very limited portion of the spectrum. Furthermore, since scanning one sub-band at a time is a possible action of our active spectrum sensing strategy, our method subsumes previous techniques to scan the spectrum optimally, without mixing it. We refer to this approach as that of {\it Direct Inspection (DI)} and discuss it in Section \ref{subsec:DI}. 

Our work is more closely related with the stochastic optimization schemes that extend the framework in  \cite{zhao2008myopic,unnikrishnan2010algorithms,wang2012optimality}, and optimize a CSS action based on previous observations \cite{hao2012sequential,malioutov2010sequential,zhao2015cooperative,braun2015info,saeed_restless}. With the exception of \cite{saeed_restless}, the common goal in these papers is the recovery of the full support of a given vector. Typically, the techniques proposed are shown to be able to cope with lower SNR in the signal reconstruction with low complexity.
What these optimizations do not capture is the fact that, in cognitive spectrum sensing applications, a timely decision is also desirable, to have enough time to exploit the spectrum. }
In fact, our method is also adaptive with respect to the time horizon $K$, the number of resources $N$, the prior probabilities on the states of each resources, and the parameters that characterize the utility function (i.e. reward/penalty for good/bad decisions). 
Interestingly, from our performance analysis, it is clear that sparse and adaptive sensing matrix designs outperform dense sensing matrices, as well as those that are sparse, but static. There are two main reasons for this: 1) through belief propagation algorithms, they achieve near-optimum detection performance; 2) they mitigate the aforementioned noise folding phenomena.  
We also emphasize in the paper that our model is applicable not only when the utility comes from finding empty entries (e.g. spectrum sensing), but also when one is interested in finding the occupied ones (e.g. in a RADAR application).

More specifically, the paper is organized as follows: Section \ref{sec:model} is dedicated to the signal model and the analog front-end of our detector, and in Section \ref{sec:opt_problem} we formulate the optimization problem. Then, in Section \ref{sec:design} we study the optimal dynamic design, for the Direct Inspection (DI) case (\ref{subsec:DI}), where there is no mixing of sub-bands (also known as scanning receiver), and a Group Testing (GT) case (\ref{subsec:GT}), where we introduce the possibility of mixing different bands. We will show that, even if finding the optimal policy is exponentially complex in the number of resources, we can characterize the approximation factor for a greedy procedure. Section \ref{sec:additional} is dedicated to alternative detection approaches: linearization of Maximum Likelihood (ML) estimate and covariance estimation via LASSO relaxation. Numerical results to sustain our claims are presented in Section \ref{sec:simulations}.\\[.2cm] 
{\bf Notation}
We use bold lower-case to represent vectors, bold upper-case for matrices and calligraphic letters to indicate sets.
With $\bs_{\A}$ we indicate the entries $i\in\A$ of vector $\bs$, and with $\|\mathbf{y}\|_{\mathbf{A}}^2$ we represent the weighted $\ell_2$-norm  $\mathbf{y}^T\mathbf{A}\mathbf{y}$.
For any set function $f(\A)$ we define the marginal increment for adding 
element $a$, as $\partial_a f(\A)=f(\A+a)-f(\A).$ 
\section{Xampling Detector}\label{sec:model}
In the context of spectrum sensing for cognitive radio, in addition to the payload, each transmission includes large amounts of control signals overhead.
It is then natural to assume that the activity of the Primary Users (PUs), in a certain spectrum, will persist for several sampling periods (see Fig.\ref{fig:cognitive_figure}).
\begin{figure}[ht]
\centering
\includegraphics[width=\linewidth]{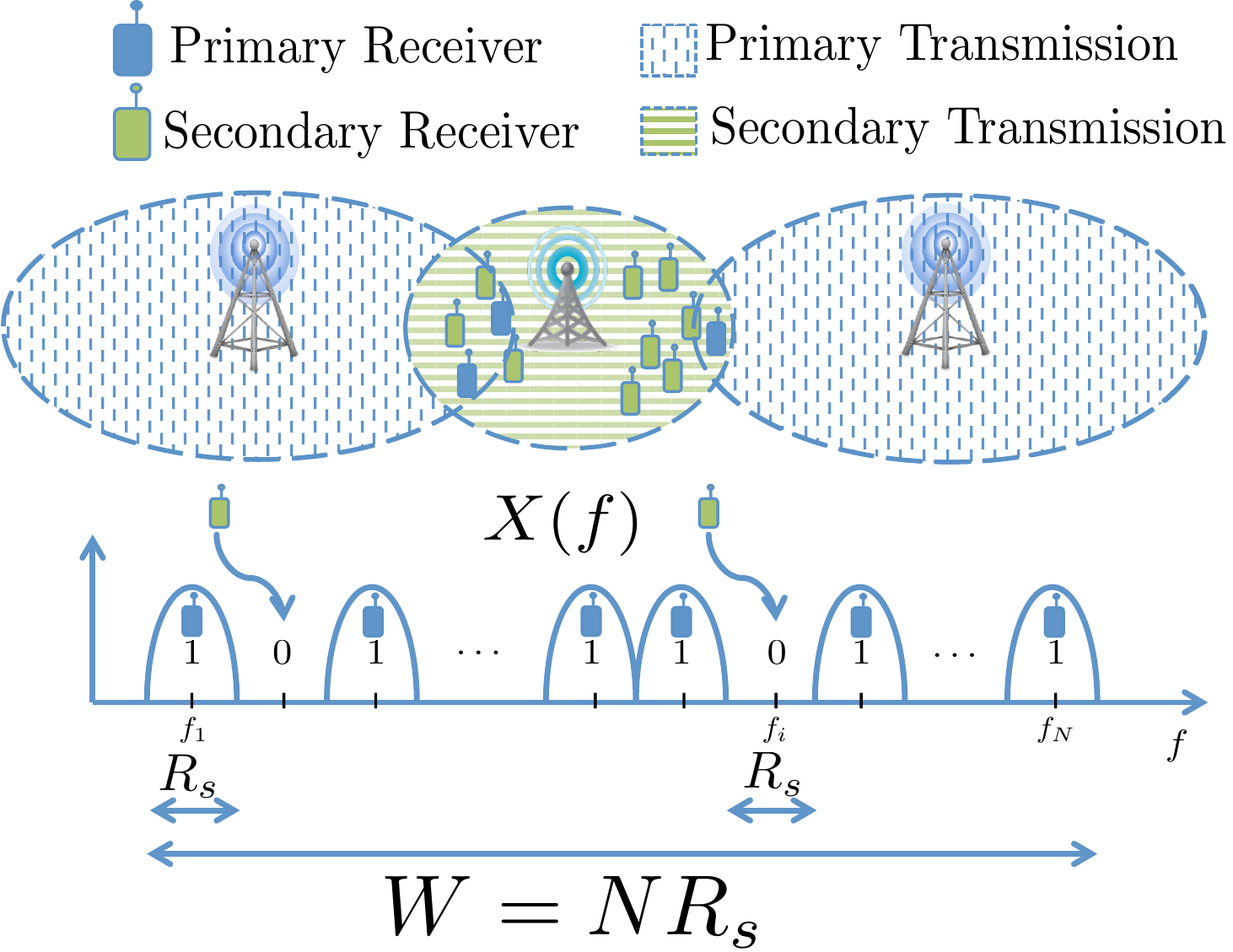}
\caption{Cognitive radio scenario}\label{fig:cognitive_figure}
\end{figure} 
However, assuming this interval lasts $T=KT_s$, the sensing {\color{black} mechanism should provide} the fastest decision it can. 
The goal of the proposed cognitive receiver is to sequentially sense the spectrum for the first portion of the interval and transmit the most it can during the remaining time, {\color{black} over the sub-channels found empty.}
\subsection{Sensing Architecture and Observation Model}
We assume that the complex envelope of the analog signal we are exploring is a multicomponent signal, with overall bandwidth equal to $W=NR_s$.
The components are indexed by $i\in\N$, where $\N\triangleq\{1,2,\dots,N\}$.
The elements of the binary vector  $\bs=\{s_i:i\in\N\}\in\{0,1\}^N$  indicate presence (1) or absence (0) of a component (i.e. Primary User (PU) communication signal) over the $i$-th sub-band.
During the interval $0\leq t < T=KT_s$ the received signal is:
\begin{align}\label{eq.model}
	y(t) &= x(t)+w(t)\\
	x(t) &= \sum_{i=1}^{N} s_i
 x_i(t) e^{-j2\pi R_s (i-1)  t}.
\end{align} 
with $w(t)\sim {\cal N}(0,N_0\delta(\tau))$  being Additive White Gaussian Noise {\color{black} (AWGN)}.
The components of the received signals $x_i(t)$ correspond to each PU source, modeled as band-limited random processes with bandwidth $R_s$; they are equal in the mean square sense to the following process: 
\be\label{eq:sum_x_i}
x_i(t)=\sum_{k=1}^{K}x_i[k]sinc(\pi (R_st-k+1)).
\ee
Rather than having a filter bank architecture as in \cite{saeed_restless}, to further reduce the hardware complexity, we base our scheme on a sequential non-coherent sampling architecture. 
\begin{figure}[!htbp]
\centering
\includegraphics[width=\linewidth]{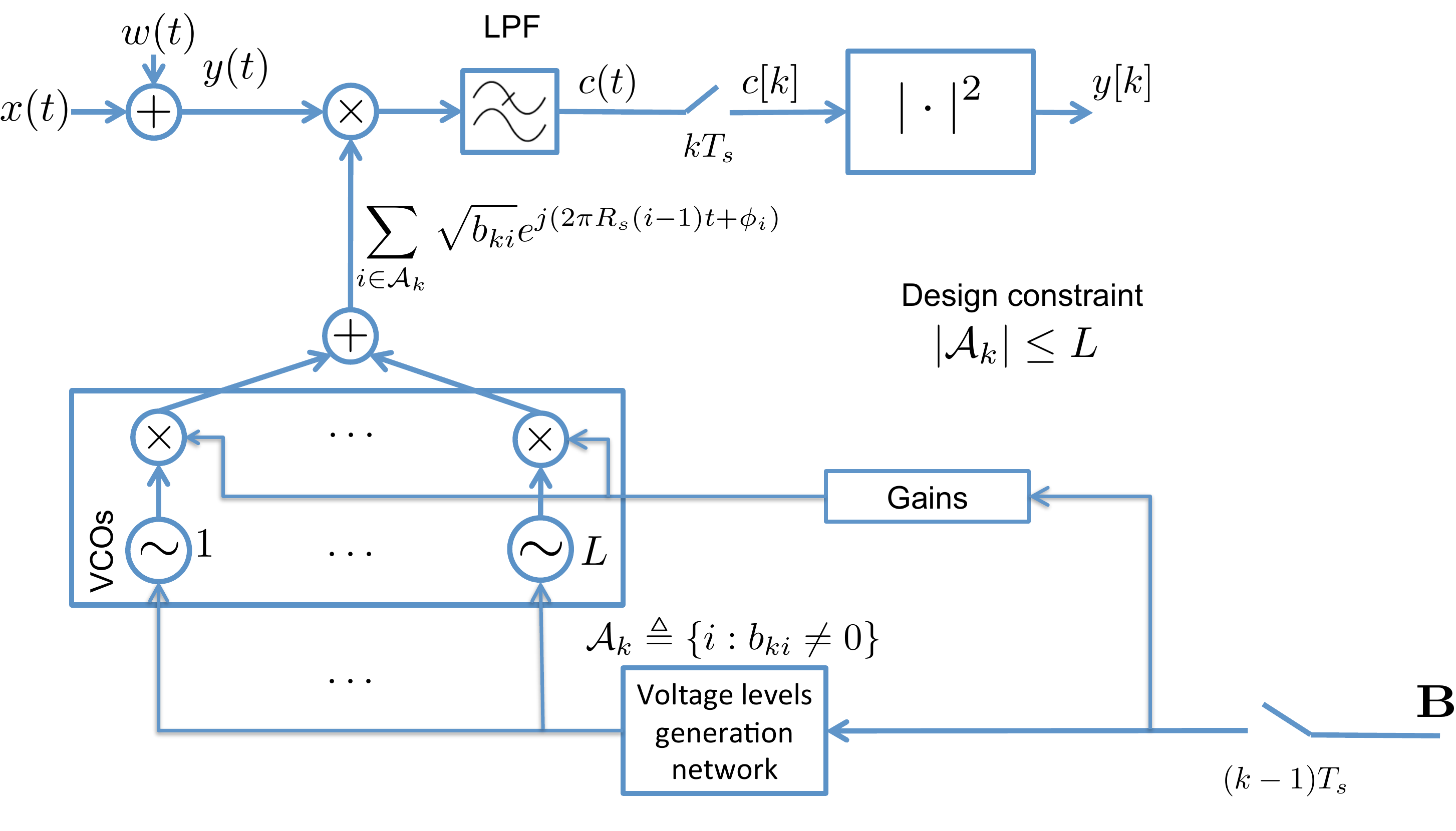}
\caption{Sequential Xampling detector hardware architecture. The inputs of the $L$ Voltage Controlled Oscillators (VCOs) are voltage levels proportional to the position of the non-zero entries of the vector $\mathbf{b}_k$, to switch and recenter their frequency at the corresponding multiple of $R_s$, according to the prescribed sensing action.}
\label{fig:circuit_diagram}
\end{figure}
As the diagram in Fig. \ref{fig:circuit_diagram} shows, the sequential receiver we propose first modulates the received signal 
over the period $(k-1)T_s\leq t <kT_s$ with a signal $g_{k}(t)$, synthesized by 
selecting an appropriate input to the $L$ VCOs. More specifically, if 
we denote by ${\cal A}_k\subset {\cal N}$ the subset of $|{\cal A}_k|\leq L$
sub-band mixed in the sample for the $k$th test, we have:
\begin{align}
g_{k}(t)&=\sum_{i=1}^N a_{ki}e^{j 2\pi R_s (i-1) t},\label{eq:mixing_signal}\\
a_{ki}&=\sum_{u\in {\cal A}_k}\sqrt{b_{ku}}e^{j\phi_{ku}}\delta[i-u]\label{eq:a_i_def}. 
\end{align}
where, as shown in \eqref{eq:a_i_def}, only $|{\cal A}_k|\leq L$ tones are activated, and the phase $\phi_{ku}$ in \eqref{eq:mixing_signal}-\eqref{eq:a_i_def} accounts for the delay in generating the tone at the $u$-th frequency included in ${\cal A}_k$ plus the oscillator phase, while $\sqrt{b_{ki}}$ is the amplitude. The optimal choice of the set ${\cal A}_k$ is the subject of Section \ref{sec:opt_problem}.
Correspondingly, for our incoherent detector, the sensing vector coefficients $b_{ki}$ (that is, the $k$th row of the sensing matrix ${\mathbf B}$), associated to the $k$th sample, are:
\be
b_{ki}=|a_{ki}|^2=\sum_{u\in {\cal A}_k}b_{ku}\delta[i-u]\label{eq:bki}
\ee
irrespective of the random phase $\phi_{ku}$ of the VCO tones activated. Clearly, if $L\ll N$, ${\mathbf B}$ is a sparse matrix.
Then, after convolving the modulated signal with an ideal low-pass filter, with impulse response $sinc(\pi R_s t)$, the receiver samples the output $c(t)$ at times $kT_s,~k=1,\ldots,\kappa$. This operation is equivalent\footnote{If the periodic signals where not truncated in time the relationship would be exact, in practice there will be some approximation error due to the windowing of the signal over the prescribed interval $[(k-1)T_s,kT_s]$. The effect of this can be mitigated by using raised cosine filtering and a non rectangular window to reduce the effect of side lobes. } to an orthogonal projection, as shown below:
\begin{align}
c[k]&=\left.[y(t)g_{k}(t)] \star R_s sinc(\pi R_s t)\right|_{t=kT_s}=\sum_{i=1}^N\sqrt{b_{ki}}e^{j\phi_i}Y_{ki}
\end{align}
where $Y_{ki}$ represent the orthogonal projections over the period  $(k-1)T_s\leq t <kT_s$ of $y(t)$ over the following signals:
\be
Y_{ki}= \langle y(t),R_se^{j2\pi R_s (i-1)  t}sinc\left(\pi \left(R_st-k+1\right)\right)\rangle.
\ee
Note that the signals 
$$\left\{e^{j2\pi R_s (i-1)  t}sinc\left(\pi \left(R_s t-k+1\right)\right)\right\}_{i,k\in \mathbb{Z}}$$ 
form a orthogonal basis, and that \eqref{eq.model} is equivalent to:
\begin{align}\label{eq.model-expanded}
	x(t) &= \!\!\sum_{k=1}^{K}\sum_{i=1}^Ns_ix_i[k]
		 e^{j2\pi R_s (i-1)  t}sinc(\pi (R_st-k+1))
\end{align} 
\be
Y_{ki}=s_ix_i[k]+w_i[k]
\ee
where $w_i[k]\sim {\cal C N}(0,n_{i})$. If we model $x_i[k]$ also as i.i.d.
$x_i[k]\sim {\cal C N}(0,\varphi_i)$, we get that for a given state $\bs$:
\be
Y_{ki}\sim {\cal CN}(0,s_i\varphi_i+n_{i}).
\ee
where $\boldsymbol{\varphi}$ is a vector collecting the average, unknown {\it a priori}, received signal power from the existing communications.
The receiver samples for $k=1,\ldots,\kappa$ are:
\begin{align}
c[k]&=\sum_{i=1}^N\sqrt{b_{ki}}e^{j\phi_i}\left(s_ix_i[k]+w_i[k]\right)
\end{align} 
and therefore, assuming $\phi_i$'s are independent and uniformly distributed in $[0,2\pi)$, they are also conditionally zero mean Gaussian random variables:
\begin{align}
c[k]&\sim {\cal CN}\left(0,\theta[k]\right),\\
\theta[k]&=\theta(\bb_k,\bs)\triangleq\bb_k^T\left(\boldsymbol{\varphi}+\boldsymbol{n}\right).\label{eq:theta_def}
\end{align}
It follows that the information for the detection of the PU communications is embedded in the variance of the sample (which is the energy received during the $k$-th period). Sufficient statistics for our problem are:
\be\label{eq:y_def}
y[k]\triangleq |c[k]|^2
\ee 
which are exponentially distributed, i.e. $y[k]\sim Exp(\theta[k])$.
\footnote{Throughout the paper we will use, for convenience, the alternative parameterization for the exponential distribution, e.g. $f(k;\theta)=\frac{1}{\theta}e^{-\frac{x}{\theta}}u(x)$.}
\begin{remark}
{\it 
The signals are subject to linear distortion due to a multi-path channel. For the case of Rayleigh narrowband fading and digital modulations, such as PSK with constant amplitudes, the model is correct, but it is only an approximation in other cases. 
Also, it {\color{black} would be} appropriate to include a certain correlation among the samples  $x_i[k]$ in the case of a frequency selective channel.  We do not consider it, given the generalization is straightforward and the opportunistic strategy is a viable sub-optimum solution for that case as well. Finally it is useful to remark that, irrespective of the statistics of the received signal from the PUs, the variance of the samples 
$c[k]$ will have the same expression, which makes the energy detector a viable heuristic in general.   
}
\end{remark}

Our work will investigate the design of:
\begin{enumerate}
\item the $\kappa\times N$ measurement matrix $\mathbf{B}$, whose rows are the vectors $\bb_k$ (exploration phase)
\item a set of $N$ decision rules $\bdelta=\{\delta_i\in\{0,1\}:i=1,2,\dots,N\}$ over the unknown states $s_i$ of the resources (at the end of the exploration phase).
\end{enumerate}
Notice that the design of $\mathbf{B}$ includes:
\begin{itemize}
\item the measurement vectors $\bb_k$ for each test at time $k=1,2,\dots,\kappa$  (matrix rows),
\item  the sensing (exploration) time $\kappa$ to acquire information on the states $s_i$ via the observations $y[k]$ (number of rows).
\end{itemize}
\subsection{Hardware Considerations}\label{subsec:hard_limit}


From the sufficient statistics $y[k]$ in \eqref{eq:y_def}, it follows that our incoherent receiver can be implemented by combining different oscillators that do not require to be synchronized or phase-locked (see Fig.\ref{fig:circuit_diagram}). Furthermore, the switching frequency of the oscillator is also $R_s$, i.e. the single channel bandwidth, which is much slower than the limit of state-of-the art hardware. 
In the proposed receiver, the input is mixed with the signal $g_k(t)$, that folds the spectrum present in specific sub-bands onto the center frequency of the receiver, during what we can refer to as a {\it sub-Nyquist} carrier sensing phase. 
The samples are spaced by intervals of duration $T_{s}=1/R_s$ which is a factor $1/N$ smaller than the total spectrum.

Naturally, the mapping of the signal in general will be imperfect and, like in any ADC, calibration is necessary \cite{chen2010calibration,israeli2014hardware}. For most ADCs the assumption is that this calibration is done during an initial training phase, in which a known input signal can be used to estimate the equivalent matrix $\mathbf{B}$. 

As far as the proposed architecture is concerned, the circuit diagram of Fig.\ref{fig:circuit_diagram} assumes a {\it settling time} for the VCOs much smaller than $T_s$, i.e. the sampling period for the single channel sub-band. 
If this assumption does not hold, one should use a LPF with a smaller bandwidth and collect the samples $c[k]$ at an even slower rate than $R_s$, to wait for the VCOs to settle. This modification would not alter the statistical characterization of the samples, derived in the previous subsection.
The drawback of taking samples less often is that (assuming the same occupancy coherence time) one would have accrued less information than what is available in the received signal, and would have less than $K$ slots to decide. Given that our strategy is derived as a function of $K$, this would not invalidate our findings.
Another possibility would be to replace the $L$ tunable VCOs with $N$ oscillators at constant frequencies, corresponding to the $N$ possible bands of the signal. Using $N$ oscillators would increase the power consumption and cost of the circuit but would significantly reduce the switching time between two measurements.
Hence, this would be the natural choice if one wants to exploit a dense sensing matrix. Instead, the use of a bank of VCOs is preferable if the matrices are sparse because a small number of VCOs can synthesize the mixing signal.  
The switching would be in fact performed by a multiplexer, that would take the sum of the up to $L$ tones selected by the vector $\bb_k$.

In general, since we focus on the detection of the signal, with reasonably good device components we expect that calibration will either be far less demanding or unnecessary, if one accepts loss in sensitivity. In fact, the binary coefficients for the vector $\bb$ can be set to ones and zeros, as discussed in \ref{subsubsec:L_2}.  Controlling the gains is unnecessary for the system to work, and it is preferable to not add tunable gains, as they can be another possible source of uncertainty and complexity in the system. Finally, imperfect tuning of the VCOs will reduce the SNR, either by spreading or misplacing the center frequency of the components of interest, but not fundamentally impair the detection.
\section{Optimization framework}\label{sec:opt_problem}

During the times devoted to sensing $k=1,2,\dots,\kappa<K$ the player has the possibility to dynamically and adaptively design each measurement, by selecting a sensing vector $\bb_k=[b_{k1},b_{k2},\dots,b_{kN}]$, that indicates what subset of the entries of $\bs$ to probe (that is, 
the $b_{ki}$'s  will be non-zero only on the channels that are actively sensed).
To capture the optimization between sensing (exploration) and exploitation of a subset of the $\mathcal{N}$ sub-bands, we model a total utility proportional to the time left for exploitation $(K-\kappa)$. 
The detector acquires information about the entries $s_i$ via  observations with a p.d.f. parameterized by an unknown vector. 
Together with the optimal design of $\mathbf{B}$, the detector also optimizes the binary decision vector $\bdelta$, once the observations have been collected, on the state of the sub-bands.
We denote the type I (false alarm) and type II (miss) errors probability with $\alpha_i,\beta_i$ respectively, i.e. $\alpha_i=P(\delta_i=1|s_i=0)$ and $\beta_i=P(\delta_i=0|s_i=1)$. 
We assume the state entries $s_i$ are mutually independent Bernoulli random variables with known prior probabilities, given by a vector $\bomega=[\omega_1,\omega_2,\dots,\omega_N]$, where $\omega_i=P(s_i=0)$. 
{\color{black} A practical guideline to initialize the $\omega_i$'s is to set them to be uniform, and equal to a conservative estimate of the expected fraction of busy channels. In addition, past tests results could potentially be used to update the beliefs. 
Let us consider a reward $r_i>0$ for correctly detecting an empty sub-band and a penalty $\rho_i<0$ for failing to detect a busy sub-band, the utility can be written as 
\begin{align} 
&U(\bs,\N,K,\mathbf{B},\bdelta)\triangleq(K-\kappa)\!\sum_{i=1}^N\omega_ir_i\left(1-\alpha_i\right)+(1-\omega_i)\rho_i\beta_i\label{eq:utility_definition_empty_only}
\end{align} 
We anticipate however, that the framework proposed can be extended to cover applications where the utility comes from an action on the entries detected as busy, e.g. for a RADAR application, where the entries correspond to spatial directions.
Thus we use {\it SS case} (Spectrum Sensing) or {\it R case} (RADAR) to refer to the case where utility is generated by detection of empty or busy entries (also referred as resources), respectively. 
Extending the definition in \eqref{eq:utility_definition_empty_only} 
\begin{align} 
&U(\bs,\N,K,\mathbf{B},\bdelta)\label{eq:utility_definition}\\
&\triangleq\begin{cases}(K-\kappa)\!\sum_{i=1}^N\omega_ir_i\left(1-\alpha_i\right)+(1-\omega_i)\rho_i\beta_i &\mbox{{\color{black} SS case}}
\\(K-\kappa)\sum_{i=1}^N(1-\omega_i)(1-\beta_i)r_i+\omega_i\alpha_i\rho_i&\mbox{{\color{black} R case}} \nonumber
\end{cases}
\end{align} 
}
\begin{remark}
{\it In our framework there are two possible actions over a resource: the null action that always brings zero utility, and the other one that brings a random utility, which depends on the actual channel state.
This is motivated by the emphasis we place on the time-dependent utilization of the resources, which we assume occurs only in one of the binary states (based on the application of interest).
Note that a more general formulation for \eqref{eq:utility_definition} with $4$ different rewards/penalties (for the possible cases $(s_i,\delta_i)$) would not alter the structure of the problem, nor invalidate our results, i.e. there exists a unique mapping from our model to such case. 
}
\end{remark}
Finding the optimal policy corresponds to solve the following optimization problem 
\begin{equation}\label{eq:optimization_problem_intro}
\begin{aligned}
& \underset{\mathbf{B},\bdelta}{\text{maximize}}
& & \mathds{E}\left[U(\bs,\N,K,\mathbf{B},\bdelta)\right].\\
\end{aligned}
\end{equation}
\vspace{-.7cm}
\mbox{}
\nomenclature[v01]{$\A$}{Resources to sense}
\nomenclature[v02]{$\A_k$}{Resources to sense at time $k$}
\nomenclature[v03]{$\alpha, \upalpha$}{Type I error probabilities}
\nomenclature[v04]{$\beta, \upbeta$}{Type II error probabilities}
\nomenclature[v05]{$\mathbf{B}$}{Sensing matrix}
\nomenclature[v06]{$\bb_k$}{Mixing coefficients at time $k$}
\nomenclature[v07]{$\cB_i$}{Tests for resource $i$}
\nomenclature[v08]{$\mathcal{C}$}{Cycle for a single test}
\nomenclature[v09]{$\mathscr{C}$}{Set of cycles}
\nomenclature[v10]{$\bgamma$}{Test thresholds}
\nomenclature[v11]{$\bdelta$}{Decision rules}
\nomenclature[v12]{$\E$}{Set of edges}
\nomenclature[v13]{$\theta$}{Observations' distribution parameter}
\nomenclature[v14]{$K$}{Time horizon}
\nomenclature[v15]{$\kappa$}{Number of designed tests}
\nomenclature[v16]{$L$}{Maximum number of mixed sub-bands per test}
\nomenclature[v17]{$\N$}{Resources}
\nomenclature[v18]{$\boldsymbol{n}$}{Average received noise power}
\nomenclature[v19]{$\pi_{s_i}$}{Probability of declaring $\mathcal{H}_1$ given $s_i$}
\nomenclature[v20]{$r_i$}{Reward for resource $i$}
\nomenclature[v21]{$\rho_i$}{Penalty for resource $i$}
\nomenclature[v22]{$\bs$}{Resources' binary state}
\nomenclature[v23]{$\boldsymbol{\varphi}$}{Average received signal power}
\nomenclature[v24]{$y$}{Observation}
\nomenclature[v25]{$\bomega$}{Resources' prior belief}
\printnomenclature
\section{Dynamic Design of Sensing Matrices}\label{sec:design}
\subsection{Direct Inspection (DI) case}\label{subsec:DI}
In the DI case, we limit $\bb_k$ to have only one non-zero entry $i$, i.e. $b_{ki}\neq 0$, $b_{kj}=0~\forall j\neq i$. 
This means that there is an underlying hypothesis testing:
\begin{align*} 
\mathcal{H}_0&:y[k]\sim Exp~(\theta_0[k])\\
\mathcal{H}_1&:y[k]\sim Exp~(\theta[k])
\end{align*}
with $\theta_0[k]=b_{ki}n_i$ and $\theta[k]=b_{ki}(\varphi_i+n_i)>\theta_0[k]$.
In this context, it is known that the signal energy is a sufficient statistic for the test and that energy detection is optimal. 
Assuming no prior knowledge over the $\varphi_i$'s in case of existing communication, we only need to design the test threshold, which we set in order to maximize the utility defined in \eqref{eq:utility_definition}.
By defining $\theta^\star[k]\triangleq\max\{y[k],b_{ki}(\varphi_{\text{min}}+n_i)\}$ we get: 
\be\label{eq:energy_detection_test}
y[k]\overset{\mathcal{H}_1}{\underset{\mathcal{H}_0}{\gtrless}}\frac{\ln\left(\gamma_i\frac{\theta^\star[k]}{\theta_0}\right)}{\frac{1}{\theta_0}-\frac{1}{\theta^\star[k]}}
\ee 
\be\label{eq:threshold_gamma}
\gamma_i\triangleq\begin{cases}\frac{r_i\omega_i}{|\rho_i|(1-\omega_i)}&\mbox{{\color{black} SS case}}
\\\frac{|\rho_i|\omega_i}{r_i(1-\omega_i)}&\mbox{{\color{black} R case}}\end{cases}
\ee
Notice that, assuming a minimum average received signal power $\varphi_{\text{min}}>0$ in case of existing transmission, makes the test meaningful also for values of $\gamma_i<1$.
\begin{assumption}\label{ass:no_sensing_no_utility}
{\it 
To simplify the decision problem, we will assume every resource has to be sensed before being declared empty/busy.
This can be enforced as a standard/protocol rule or numerically guaranteed by setting
$\forall i\in\N, \omega_i<\frac{\rho_i}{\rho_i+r_i}$ ({\color{black} SS case}) / $\omega_i>\frac{r_i}{|\rho_i|+r_i}$({\color{black} R case}).
}
\end{assumption}
It is clear that the optimality \footnote{The threshold in \eqref{eq:threshold_gamma} is the optimal threshold that minimizes the Bayesian risk (maximize our utility) for the binary case, when $\varphi_i$ is known. It is of common practice to replace the MLE estimate for the unknown $\varphi_i$ (GLRT) and then reduce to the binary case, using the same threshold. A local most powerful test exists for $\theta\rightarrow\theta_0$ but GLRT is preferred for high SNR range.} of the test completely characterizes the set of decision rules $\bdelta$ for the sensed resources, while Assumption \ref{ass:no_sensing_no_utility} gives us the decision rules for the non-sensed resources. This implies that for the DI case, the optimization in \eqref{eq:optimization_problem_intro} can be expressed solely in terms of $\mathbf{B}$.
It is also known that for this type of test, where there is uncertainty in a parameter of the alternative hypothesis, one does not know the exact miss probability $\beta$; thus we will use an upper-bound, which will reflect in a lower bound for the achievable utility.
Since this test is part of the DI strategy, we add the superscript $^{DI}$ to the test error probabilities $\alpha_i$ and $\beta_i$ and have:
\begin{align}
\alpha^{DI}_{i}&=\min\left\{\left(\frac{|\rho_i|(1-\omega_i)}{r_i\omega_i\left(1+\frac{\varphi_{\text{min}}}{n_i}\right)}\right)^{\frac{1+\frac{\varphi_{\text{min}}}{n_i}}{\frac{\varphi_{\text{min}}}{n_i}}},1\right\}\\
\beta_{i}^{DI}&=1-\left(\alpha_{i}^{DI}\right)^{\frac{1}{1+\frac{\varphi_i}{n_i}}}.
\end{align}
What we can guarantee, since $\varphi_i\geq\varphi_{\text{min}}$ is that 
\be\label{eq:max_MD_prob} 
\beta_{i}^{DI}\leq 1-\left(\frac{\rho_i(1-\omega_i)}{r_i\omega_i\left(1+\frac{\varphi_{\text{min}}}{n_i}\right)}\right)^{\frac{n_i}{\varphi_{\text{min}}}}=\beta_{i,\text{max}}^{DI}
\ee 
\begin{remark}
{\it 
The test performance for the DI case does not depend on $b_{ki}$, therefore, for the DI case no further optimization is needed over the sensing matrix $\mathbf{B}$, other than selecting the non-zero entries.}
\end{remark}
Under Assumption \ref{ass:no_sensing_no_utility}, we can rewrite the optimization problem in \eqref{eq:optimization_problem_intro} for the DI case as
\begin{equation}\label{eq:optimization_problem_DI_case}
\begin{aligned}
& \underset{\A\subseteq\N}{\text{maximize}}
& & U^{DI}(\A)\\
\end{aligned}
\end{equation}
\begin{align}
U^{DI}\left(\A\right)&\triangleq(K-|\A|)\sum_{i\in\A}u_i^{DI}\\
u_i^{DI}&\triangleq \omega_ir_i(1-\alpha^{DI}_i)-(1-\omega_i)\rho_i\beta^{DI}_{i,\text{max}}\label{eq:U_i_def}
\end{align} 
We then introduce the following Lemma
\begin{lemma}\label{lemma:DI_submodular}
{\it
$U^{DI}({\A})$ is a normalized, non-monotone, non-negative sub-modular function
of $\A$.
} 
\end{lemma}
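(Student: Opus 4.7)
The plan is to verify the four properties directly from the closed form
$$U^{DI}(\mathcal{A}) = (K-|\mathcal{A}|)\sum_{i\in\mathcal{A}}u_i^{DI},$$
the main tool being an explicit computation of the marginal increment $\partial_a U^{DI}(\mathcal{A})$. Before starting, I would note that Assumption~\ref{ass:no_sensing_no_utility} guarantees $u_i^{DI}\geq 0$ for every $i\in\mathcal{N}$: the inequality $\omega_i<\rho_i/(\rho_i+r_i)$ (respectively its R-case analogue) is exactly the condition under which the optimal threshold test returns nonnegative expected utility per slot, otherwise one would prefer not to sense at all. I would also restrict attention to $|\mathcal{A}|\leq K$, since sensing more than $K$ times is infeasible within the horizon.

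Normalization and non-negativity are then immediate: $U^{DI}(\emptyset)=K\cdot 0=0$, and for any $\mathcal{A}$ with $|\mathcal{A}|\leq K$, both factors $(K-|\mathcal{A}|)$ and $\sum_{i\in\mathcal{A}}u_i^{DI}$ are nonnegative. Next I would compute the marginal increment for any $a\notin\mathcal{A}$:
\begin{align*}
\partial_a U^{DI}(\mathcal{A}) &= (K-|\mathcal{A}|-1)\Bigl(\sum_{i\in\mathcal{A}}u_i^{DI}+u_a^{DI}\Bigr)-(K-|\mathcal{A}|)\sum_{i\in\mathcal{A}}u_i^{DI}\\
&= (K-|\mathcal{A}|-1)\,u_a^{DI}-\sum_{i\in\mathcal{A}}u_i^{DI}.
\end{align*}
Non-monotonicity then follows by inspection: when $\mathcal{A}=\emptyset$ the increment is $(K-1)u_a^{DI}\geq 0$, but as $|\mathcal{A}|$ approaches $K-1$ the first term vanishes while the (nonnegative) subtracted sum grows, eventually making $\partial_a U^{DI}(\mathcal{A})$ strictly negative; hence $U^{DI}$ is not monotone.

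Finally, submodularity is the diminishing returns inequality $\partial_a U^{DI}(\mathcal{A})\geq \partial_a U^{DI}(\mathcal{B})$ for all $\mathcal{A}\subseteq\mathcal{B}\subseteq\mathcal{N}$ and $a\notin\mathcal{B}$. Subtracting the two expressions above gives
$$\partial_a U^{DI}(\mathcal{A})-\partial_a U^{DI}(\mathcal{B})=(|\mathcal{B}|-|\mathcal{A}|)\,u_a^{DI}+\sum_{i\in\mathcal{B}\setminus\mathcal{A}}u_i^{DI},$$
which is a sum of nonnegative terms, proving the claim. I do not anticipate a real obstacle here: the only subtle point is justifying $u_i^{DI}\geq 0$, which I would address in a single sentence by invoking Assumption~\ref{ass:no_sensing_no_utility}; the rest is bookkeeping on the affine-in-sum structure of $U^{DI}$.
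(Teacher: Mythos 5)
Your proof is correct and in substance is the same as the paper's: both arguments reduce everything to the non-negativity of the per-resource utilities $u_i^{DI}$, yours via the diminishing-returns inequality $\partial_a U^{DI}(\mathcal{A})\geq\partial_a U^{DI}(\mathcal{B})$ and the paper's via the equivalent four-point inequality $U^{DI}(\mathcal{A}+a)+U^{DI}(\mathcal{A}+b)\geq U^{DI}(\mathcal{A}+a+b)+U^{DI}(\mathcal{A})$, which collapses to $u_a^{DI}+u_b^{DI}\geq 0$ (your treatment of normalization, non-negativity and non-monotonicity is more explicit than the paper's, which only sketches the submodularity step). One small correction: $u_i^{DI}\geq 0$ is not a consequence of Assumption 1, which only guarantees that an \emph{unsensed} resource cannot yield positive utility; it holds, as the paper observes, because the optimized threshold can always fall back on the trivial decision $\alpha_i=1,\ \beta_i=0$, for which $u_i^{DI}=0$, so the maximized per-slot utility is nonnegative irrespective of the priors.
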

\begin{proof}
See Appendix \ref{app:proof_DI_submodular}
\end{proof}
Lemma \ref{lemma:DI_submodular} implies that there are diminishing returns in augmenting sets by adding a certain action to bigger and bigger sets.
The maximization of a non-monotonic sub-modular function is generally NP-hard, but the case of interest is not as difficult. In fact, by sorting the resources $i$
so that:
\be
u_1^{DI}\geq u_2^{DI}\geq \ldots\geq u_N^{DI}
\ee 
the set of size $i$, $\A_i=\{1,\ldots,i\}$ will be such that for any set ${\mathcal X}$
 of size $|{\mathcal X}|=i$
 $$\sum_{k=1}^i u_k^{DI}\geq \sum_{k\in {\mathcal X}}u_k^{DI}$$
Therefore, what remains is to find the best set size $i$ such that 
\be\label{eq:max_DI}
U^{DI}({\A})\leq U^{DI}(\A_i)\leq 
\max_{i}\left( (K-i)\sum_{k=1}^i u_k^{DI}\right)
\ee
The maximum in \eqref{eq:max_DI} is attained for 
\be\label{eq:stopping_rule_DI}
i^*=\inf\limits_{i}\{i: \partial_{i+1} U^{DI}({\A_i})<0\}
\ee 
where $\partial_{i+1} U^{DI}({\A_i})=(K-i)u_{i+1}^{DI}-\sum_{k=1}^{i+1}u_k^{DI}$.      
In fact, given the function is sub-modular, as soon as this condition is attained, it is maintained for $i+2,i+3$ etc., given that the marginal returns continue to decrease.
This maximization is {\it greedy} and stops when the marginal reward becomes negative. 
\subsection{A Group Testing Approach}\label{subsec:GT}
We now allow each test to mix different sub-bands, i.e. the vector $\mathbf{b}_k$ to have more than one non-zero entry. As outlined in the Introduction, aliasing of the spectrum comes with an associated noise folding phenomenon. Its impact is particular severe in a non coherent scheme as ours. In fact, the samples are collected sequentially and not in parallel, which means that we do not have multiple observations of the same value but only sequential observations tied to the same underlying random process.

To mitigate the noise folding effects, and reduce the hardware complexity, our focus is on low density measurement matrices. Our goal is to develop a relatively simple dynamic strategy for choosing a sensing matrix, whose utility can be expressed in closed form, and can potentially outperform the DI alternative.
A common approach for recovery with low density measurement matrices is to use belief propagation via message passing\footnote{In our model, an uninformative prior can be assigned to the $\varphi_i$'s to run the belief propagation message-passing algorithm on the obtained measurements}, whose most well known application is Low Density Parity Check (LDPC) optimum error correction decoding.
For LDPC (and CS methods), performance guarantees come as asymptotic bounds on the $\ell_2$-norm, but little is known for optimal design in the finite regime.
A difficulty in the design arises from the inherent multi-hypothesis testing problem associated with sensing several resources at the same time.
This is why, to develop our dynamic design, we look at a Group Testing (GT) approach, which allows us to consider a binary hypothesis test for each measurement.
In this way, the complexity of the analysis is relatively low, and we can derive the expected performance for any sensing matrix, under mild assumptions. 
 Prior to providing more details, a remark regarding related group-testing approaches is in order:
\begin{remark}{\it 
In the context of {\it group testing}, little is known in presence of measurement errors that depend on the group size, which is the scenario this work considers, as the remainder of the paper will detail. 
Asymptotic results on the target rate for measurement-dependent noise, using an information-theoretic approach, are given in \cite{kaspi2015searching}, where the noise is modeled as independent additive Bernoulli with bias dependent on the test size. Hence, the false-alarm and missed-detection probabilities of each single test, are symmetric. 
An additional noise, called {\it dilution effect} was considered in \cite{atia2012boolean}, where each resource could independently flip from $1$ to $0$ before the grouped test, and information-theoretic bounds were provided.   
In our model the false alarm and miss-detection probabilities are dependent on the optimization of the test threshold, therefore the noise is not independently added (nor an independent dilution can be considered).
Furthermore, the strategy derived depends on the finite horizon for $K$, i.e. our results are not asymptotic. The same considerations apply to similar information-theoretic approaches in \cite{scarlett2016converse,chan2011non,sharma2015finding}.}
\end{remark}

From the sensing matrix $\mathbf{B}$, let us define the sets $\A_k=\{i\in\N:b_{ki}\neq0\}$ and $\mathcal{B}_i=\{1\leq k \leq\kappa:b_{ki}\neq 0\}$.
Note that at times we use $\mathbf{B}$  as an argument in functions that, strictly speaking, are just functions of the sets $\A_k$ just defined.
For each test we define a binary {\it group test} as follows\footnote{\color{black}
We envision that such test would be useful for a downlink transmission in which the Access Point (AP) may want to allow multiple communications at the same time and can alert the SUs over a narrowband signaling channel to access the spectrum.}: 
\be\label{eq:group_hyp_test}
\begin{cases}\mathcal{H}_0:&\forall i\in\A_k~~s_i=0\\
&\Rightarrow\theta_0[k]=\bb_k^T\boldsymbol{n}\\
\mathcal{H}_1:&\exists i\in\A_k ~~s.t.~~ s_i=1\\
&\Rightarrow\theta[k]\geq\!\left(\!\min\limits_{i}b_{ki}\!\right)\!\varphi_{\text{min}}\!+\!\bb_k^T\boldsymbol{n}=\theta_{\text{min}}[k]
\end{cases}
\ee
\begin{remark}
{\color{black}\it  It is important to highlight that the two hypotheses pertain exclusively the group of sub-bands explored in test (i.e. $\A_k$), not the whole spectrum.
Also note that this group-testing approach pertains the design of the sensing matrix and detection algorithm and not the underlying observation model. The different approaches we compare ourselves against later, use detection strategies that are multi-hypothesis tests.}  
\end{remark}

The test can be written  as:
\be 
\frac{\max\limits_{\theta[k]\geq\theta_{\text{min}}[k]}f_{\theta[k]}(y[k])}{f_{\theta_0[k]}\left(y[k]\right)}\overset{\mathcal{H}_1}{\underset{\mathcal{H}_0}{\gtrless}}\gamma_k.
\ee for which we can derive: 
\begin{align} 
\upalpha(\bb_k,\gamma_k)&=\left(\frac{1}{\gamma_k\frac{\theta_{\text{min}}}{\theta_0}}\right)^{\frac{\frac{\theta_{\text{min}}}{\theta_0}}{\frac{\theta_{\text{min}}}{\theta_0}-1}}\\
\upbeta(\bb_k,\gamma_k)&=1-\left(\upalpha(\bb_k,\gamma_k)\right)^{\frac{\theta_0}{\theta_{\text{min}}}}
\end{align}
The decision declares that resource $i$ is busy ($\mathcal{H}_1$ is true) if the majority of the tests, where resource $i$ is involved, is positive, else it accepts the null hypothesis $\mathcal{H}_0$ for resource $i$.
Thus: 
\begin{align}
&\pi_0(i,\bb,\gamma)=\!\!\left(\!1\!-\!\!\!\!\!\prod_{j\in\A_k\setminus i}\!\!\!\!\omega_j\!\right)\!\left(1\!-\!\upbeta_{i}(\bb,\gamma;0)\right)+\!\upalpha(\bb,\gamma)\!\!\!\!\!\prod\limits_{j\in\A_k\setminus i}\!\!\!\!\omega_j\label{eq:pi_0_def}\\
&\pi_1(i,\bb,\gamma)=1-\upbeta_{i}(\bb,\gamma;1)\label{eq:pi_1_def}
\end{align}
where the functions $\pi_j(i,\bb,\gamma),~j=0,1$ are only defined when $b_i\neq 0$. These functions represent the probabilities of declaring {\color{black} $\mathcal{H}_{1}$} in a group-test defined by $\bb$ with threshold $\gamma$ and given $s_i=j,~j=0,1$.
Notice that the error probabilities $\upalpha,\upbeta$ refer to each binary hypothesis testing defined in \eqref{eq:group_hyp_test}. 
The notation for $\upbeta_{i}(\bb,\gamma;s_i)$ indicates the probability of having a missed-detection conditioned on the state $s_i$ of one of the resources.
It then follows that 
\begin{align}
&\alpha^{GT}_i(\mathbf{B},\boldsymbol{\gamma})\label{eq:alpha_GT}\\
&\triangleq\!1\!-\!F_{PBD}\!\left(\left\lceil\frac{|\mathcal{B}_i|}{2}\right\rceil-1;|\mathcal{B}_i|,\{\pi_0(i,\bb_k,\gamma_k)\!:\!k\in\mathcal{B}_i\}\right)\nonumber\\
&\beta^{GT}_i(\mathbf{B},\boldsymbol{\gamma})\label{eq:beta_GT}\\
&\triangleq F_{PBD}\!\left(\left\lceil\frac{|\mathcal{B}_i|}{2}\right\rceil-1;|\mathcal{B}_i|,\{\pi_1(i,\bb_k,\gamma_k)\!:\!k\in\mathcal{B}_i\}\right)\nonumber
\end{align} 
where $F_{PBD}(k;n,\mathbf{p})$ indicates the CDF of a Poisson Binomial Distribution parameterized by $\mathbf{p}\in[0,1]^n$.
At this point, one can replace \eqref{eq:alpha_GT}-\eqref{eq:beta_GT} in \eqref{eq:utility_definition}, to then solve the optimization in \eqref{eq:optimization_problem_intro}, where the equivalence between the decision rules $\bdelta$ and the selection of the thresholds $\bgamma$ is essentially the same as for the DI case.

{\color{black} 
Notice that, in order for \eqref{eq:alpha_GT}-\eqref{eq:beta_GT} to hold, each of the tests must  be independent, conditioned on the state of the resource $i$. This is true if the sensing matrix (in the language used for LDPC codes) does not have length-4 cycles (i.e. two different measurements do not mix more than one sub-band in common)\footnote{Such condition is typically required for belief propagation algorithms, e.g. message passing, which suffer from loopy networks with short cycles.}. }

The optimization remains extremely complex due to the complexity of the decision space for $\mathbf{B}$ and the sum of an exponentially growing number of terms for the probabilities defined in \eqref{eq:alpha_GT}-\eqref{eq:beta_GT}. Nevertheless, it gives a method to evaluate the objective of our optimization for any sensing matrix $\mathbf{B}$, where the optimization over $\bgamma$ can be numerically solved.
In fact,  \eqref{eq:alpha_GT}-\eqref{eq:beta_GT} are monotonic functions of the probabilities $\pi_0,\pi_1$ defined in \eqref{eq:pi_0_def}-\eqref{eq:pi_1_def}, which are monotonic in the $\gamma_k$'s, and therefore a unique solution for $\bgamma$ exists.
Next, we introduce additional constraints to \eqref{eq:optimization_problem_intro}, in particular on the structure of $\mathbf{B}$, in order to evaluate whether a GT strategy could be superior to the DI approach.
\begin{remark}
{\it Note that an ML or a MAP estimator, for a rank-deficient sensing matrix, do not provide optimality guarantees in terms of minimum error probability or minimum Bayesian risk. Nevertheless, for the same sensing matrix, we expect the MAP estimator to outperform the binary {\it group-testing} hypothesis in \eqref{eq:group_hyp_test} by simply adding more degrees of freedom to the decision $\bdelta$ in the $\kappa$-th dimensional space of the observations. Therefore, the evaluation of the objective in \eqref{eq:optimization_problem_intro} via \eqref{eq:alpha_GT}-\eqref{eq:beta_GT} provides a benchmark for the utility obtainable with a more refined detection method.}
\end{remark}

\subsubsection{The pairwise tests case}\label{subsubsec:L_2}
We start by considering matrices $\mathbf{B}$ that have the following property: each resource is sensed only one time, either directly inspected or mixed with another resource, and no test mixes more than 2 resources, i.e. $|\A_k|\leq L=2,|\mathcal{B}_i|\leq 1~~\forall k=1,\dots,\kappa,~i=1,\dots,N$.
Let us discuss the test that mixes entries $i$ and $j$. According to the strategy derived at the beginning of the section, one can use \eqref{eq:pi_0_def}-\eqref{eq:pi_1_def}-\eqref{eq:alpha_GT}-\eqref{eq:beta_GT} to write out the per-time instant utility obtainable after the decision. 
First, from \eqref{eq:group_hyp_test}, we note that, without prior knowledge over $\varphi_i,\varphi_j$ other than the threshold $\varphi_{min}$, the best choice to minimize $\alpha$ is to set $b_{i}=b_{j}$ (we refer to this false alarm probability as $\alpha_{ij}$). 
Therefore, similar to the DI case, one can consider binary coefficients for $\bb_k$, i.e. $b_{ki}\neq 0\rightarrow b_{ki}=1$. This will hold true also for the extension of $L>2$ and will give implementation advantages as discussed in \ref{subsec:hard_limit}.

A missed detection event in \eqref{eq:group_hyp_test} can occur for three different states of the resources $i,j$; we upper-bound the corresponding missed detection probabilities by always considering $\theta=\theta_{\text{min}}$ and refer to this bound as $\beta_{ij,\text{max}}$. 
We then obtain:
\begin{align}
&u_{ij}^{GT}\triangleq \omega_i\omega_j(r_i+r_j)(1-\alpha_{ij})+\left[(\omega_i(1-\omega_j)(r_i+\rho_j)+\right.\nonumber\\
&\left.\omega_j(1-\omega_i)(r_j+\rho_i)+(1-\omega_i)(1-\omega_j)(\rho_i+\rho_j))\right]\beta_{ij,\text{max}}\label{eq:u_ij_GT_def}
\end{align}
where the threshold for this test $\gamma_{ij}$ has been set to maximize \eqref{eq:u_ij_GT_def}, i.e. 
\be\label{eq:gamma_L_2}
\gamma_{ij}=\frac{\omega_i\omega_j(r_i+r_j)}{(1-\omega_i)(|\rho_i|-\omega_jr_j)+(1-\omega_j)(|\rho_j|-\omega_ir_i)}.
\ee 
Let us then consider a graph where each resource is a vertex and the edge weight $u_{ij}$ between two vertices $ij$ is the utility (per time instant) $u^{GT}_{ij}$ just defined (the weight of the loops $u_{ii}^{GT}$ are given by $u^{DI}_{i}$ in \eqref{eq:U_i_def}). 
We can then translate our problem into a particular instance of a {\it max-cut problem}: picking a subset of the edges and form a subgraph, where each edge represents a test, to maximize the objective in \eqref{eq:optimization_problem_intro}.
Formally, we can write
\begin{equation}\label{eq:submodular_problem_L_2_R_1_initial}
\begin{aligned}
& \underset{\mathcal{E}}{\text{maximize}}
& & U^{GT}(\E)\\
&\text{subject to}  & & \deg_{\E}(i)\leq 1~~\forall i\in\N
\end{aligned}
\end{equation}
where 
\be\label{eq:tot_utility_def}
U^{GT}(\E)\triangleq\!\left(K\!-|\mathcal{E}|\right)\!\!\left(\sum_{ij\in\mathcal{E}}u_{ij}^{GT}\right)\\
\ee 
and $\deg_{\mathcal{E}}(i)$ is the nodal degree of node $i$ induced by the undirected graph $\mathcal{G}=\left(\N,\E\right)$.
It is possible to map the constraint on the nodal degree in the objective of \eqref{eq:submodular_problem_L_2_R_1_initial}, by adding a penalty for the violation of such constraint. 
This guarantees the optimal solution will be equivalent to \eqref{eq:submodular_problem_L_2_R_1_initial}, i.e. no set of edges that violates the constraint can improve the objective, and any feasible set of edges would have the same objective in the two problems.
We rewrite our optimization as
\begin{equation}\label{eq:submodular_problem_L_2_R_1}
\begin{aligned}
& \underset{\mathcal{E}}{\text{maximize}}
& & U^{GT}(\E)-M\sum_{i\in\N}\Upsilon(\deg_{\E}(i))\\
\end{aligned}
\end{equation}
where 
\be 
\Upsilon(n)\triangleq\begin{cases}0&\mbox{for}~n\leq 1\\
n-1&\mbox{for}~n\geq 2\end{cases}
\ee 
and $M$ is a positive constant. 
\begin{lemma}\label{lemma_submodular_mixed}
{\it For $M>0$ the objective in \eqref{eq:submodular_problem_L_2_R_1} is a non-monotone sub-modular function of $\E$ and it is possible to find $M^*>0$ such that for any $M>M^*$ the two optimizations \eqref{eq:submodular_problem_L_2_R_1_initial}-\eqref{eq:submodular_problem_L_2_R_1} are equivalent.}
\end{lemma}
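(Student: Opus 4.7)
The plan is to split the claim into two parts: (i) show the penalized objective is sub-modular for every $M>0$, and (ii) produce a threshold $M^*$ above which the penalty strictly enforces the degree constraint. Both parts proceed from the decomposition of the objective as $U^{GT}(\E) - M\Phi(\E)$, where $\Phi(\E)\triangleq\sum_{i\in\N}\Upsilon(\deg_{\E}(i))$.

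For the sub-modularity of $U^{GT}(\E) = (K-|\E|)\sum_{e\in\E} u_e^{GT}$, I would restrict attention to edges with $u_e^{GT}\ge 0$ (edges of negative per-instant utility are pruned without loss of optimality) and reduce to Lemma~\ref{lemma:DI_submodular} by identifying each edge $e$ with a ``resource'' of utility $u_e^{GT}$: the structural form is identical, so the computation $\partial_e U^{GT}(\E)=(K-|\E|-1)u_e^{GT}-\sum_{e'\in\E}u_{e'}^{GT}$ and the verification of $\partial_e U^{GT}(\E_1)\ge \partial_e U^{GT}(\E_2)$ for $\E_1\subseteq\E_2$ carry over verbatim. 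Non-monotonicity is immediate since the factor $(K-|\E|)$ flips sign once $|\E|>K$. For the penalty $\Phi$, the identity $\Upsilon(n+1)-\Upsilon(n)=\mathbf{1}\{n\ge 1\}$ gives, for $e=(i,j)\notin \E$,
\begin{equation*}
\partial_e \Phi(\E)=\mathbf{1}\{\deg_{\E}(i)\ge 1\}+\mathbf{1}\{\deg_{\E}(j)\ge 1\}.
\end{equation*}
Since vertex degrees are monotone non-decreasing in $\E$, these indicators are non-decreasing in $\E$, so $\Phi$ is super-modular and $-M\Phi$ is sub-modular for any $M>0$. A sum of sub-modular functions is sub-modular, which proves part~(i).

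For part~(ii), I would note that feasibility of $\E$ in \eqref{eq:submodular_problem_L_2_R_1_initial} is exactly the condition $\Phi(\E)=0$, and on feasible sets the two objectives coincide. It therefore suffices to show that any infeasible $\E$ is strictly dominated in the penalized problem by a feasible $\E'\subseteq\E$ obtained by greedy edge removal from overloaded vertices (this projection is well-defined and terminates, since each removal decreases $\sum_i\deg_{\E}(i)$). The resulting loss
\begin{equation*}
U^{GT}(\E)-U^{GT}(\E')
\end{equation*}
admits a uniform bound $C$ depending only on $K$, $N$, and $u_{\max}\triangleq\max_{ij}|u_{ij}^{GT}|$ — for instance the crude bound $C\le K\cdot\binom{N}{2}\cdot u_{\max}$ — while the penalty on $\E$ satisfies $M\Phi(\E)\ge M$. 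Choosing $M^*=C$ then yields $U^{GT}(\E')-[U^{GT}(\E)-M\Phi(\E)]>0$ for every $M>M^*$ and every infeasible $\E$, establishing the desired equivalence.

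The main obstacle is extracting a clean bound $C$ that is uniform over all infeasible $\E$ and the greedy projection $\E\mapsto \E'$; the crude estimate above is enough for the existence claim of $M^*$, but a tighter bound (e.g.\ using that only the ``excess'' edges need removal, at most $\sum_i(\deg_{\E}(i)-1)_+$ of them) would give a more useful threshold in practice. The sub-modularity argument itself is routine once the two pieces are handled separately.
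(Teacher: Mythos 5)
Your proof is correct and follows essentially the same route as the paper's: decompose the penalized objective into $U^{GT}(\E)$ and the penalty term, reduce submodularity of $U^{GT}$ to the argument of Lemma~\ref{lemma:DI_submodular}, observe that $-M\,\Upsilon(\deg_{\E}(i))$ is submodular (concavity in the modular degree, equivalently your non-decreasing indicator marginals), and establish equivalence by showing that for $M$ large enough any infeasible edge set is strictly improved by deleting excess edges. The only differences are cosmetic: you make explicit the pruning of negative-utility edges, which the paper's reduction to $u_a^{GT}+u_b^{GT}\geq 0$ leaves implicit, and you use a cruder global bound for $M^*$ where the paper uses the per-removal threshold $M > K\max_{ij}u_{ij}^{GT}$.
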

\begin{proof}
See Appendix \ref{app:proof_submodular_mixed}
\end{proof}
We now discuss the extension of this result for $L>2$, to develop a general algorithm that leverages the sub-modularity of the optimum design problem in \eqref{eq:submodular_problem_L_2_R_1_initial}.
\subsubsection{Extension to $L>2$}\label{subsubsec:extension_L_greater_2}
If we mix more than 2 channels, instead of just edges or self loops to indicate the tests, we could have cycles of length up to $L$. 
The nodal degree in \eqref{eq:submodular_problem_L_2_R_1} will then be interpreted as the number of cycles a node belongs to, and the set of edges will be replaced with the set of cycles. 
We then replace the set $\E$ of edges with the set $\mathcal{C}$ of possible cycles, and use $c$ to indicate the generic cycle (which could be a self-loop, an edge or a cycle with length $3$ or greater).
With these substitutions, the proof of sub-modularity in Lemma \ref{lemma_submodular_mixed} naturally extends to this case as well.
In light of the constraint $|\mathcal{B}_i|\leq 1$ we will have that no node can be in two cycles. 

To visualize this concept, in Fig.\ref{fig:cycles_condition} we show two possible sets of cycles of length up to $4$. On the right, we have a set of tests that respect our constraint: there is a test that only considers one resource and three tests that combine 2, 3, and 4 resources respectively, but no resource is considered in two different tests. 
On the left, instead, a resource is considered in two tests: one where it is combined with other $3$ resources, and one where it is inspected directly; such configuration is therefore not acceptable.
\begin{figure}[ht]
\centering
\includegraphics[width=\linewidth]{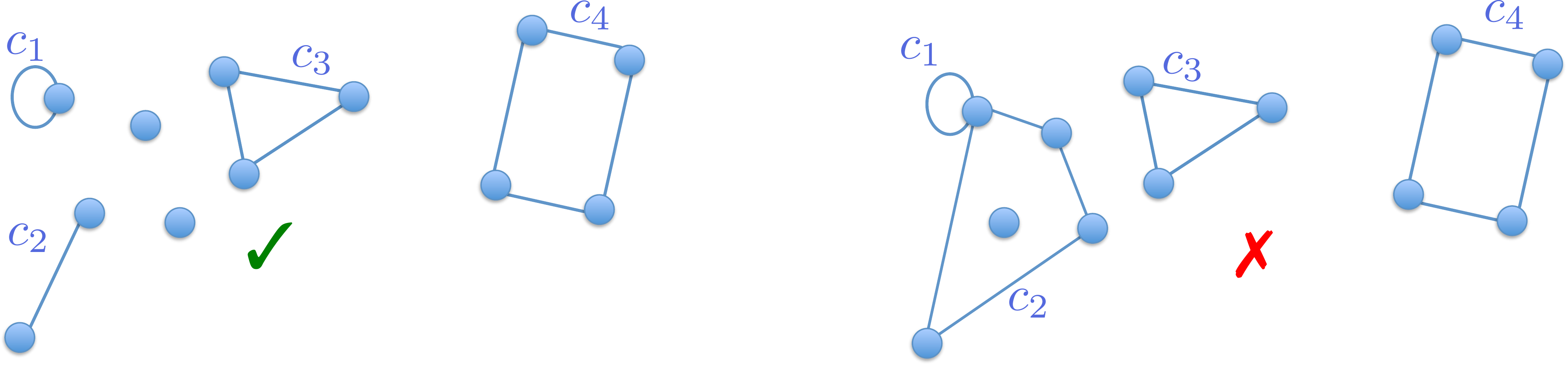}
\caption{Example of two sets of tests. The right configuration has $4$ tests and no resource is considered in two different tests, therefore it respect our constraints, whereas the left configuration has a resource included in two tests, and is not a feasible solution.}\label{fig:cycles_condition}
\end{figure}

\subsubsection{The factor approximation of the greedy algorithm}
Having proven the sub-modularity of \eqref{eq:submodular_problem_L_2_R_1} in Lemma \ref{lemma_submodular_mixed}, it is natural to resort to a greedy procedure, however it is important to highlight that the objective in \eqref{eq:submodular_problem_L_2_R_1} does not respect the non-negativity property.
To the best of our knowledge, there is no known procedure in the literature on approaching the maximization of a general sub-modular non monotone function, if the minimum value is not known: no constant approximation factor guarantee can therefore be given in general.
Nevertheless, due to the particular structure of our problem, it is possible to find a factor approximation for the output of the greedy procedure, described in Algorithm \ref{alg:mixed_maximization_L_2}. 
\begin{lemma}\label{lemma_approximation_factor}
{\it Algorithm \ref{alg:mixed_maximization_L_2} guarantees a $\alpha$-constant factor approximation of the optimal solution for \eqref{eq:submodular_problem_L_2_R_1}, where:
\begin{equation}
\alpha=\frac{1}{\min\{L_{\text{eff}},\frac{K}{2}\}}\frac{K-1}{K-\min\{L_{\text{eff}},\frac{K}{2}\}}.
\end{equation}
}
\end{lemma}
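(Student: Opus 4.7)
The plan is to upper-bound $U^{GT}(\mathcal{C}^*)/U^{GT}(\mathcal{C}_g)$ by $\alpha^{-1}$, where $\mathcal{C}^*$ denotes any optimizer and $\mathcal{C}_g = \{c_1, \ldots, c_{k_g}\}$ the greedy output ordered by selection, and $k^{*}=|\mathcal{C}^*|$. First I invoke Lemma~\ref{lemma_submodular_mixed} with $M>M^*$ so that the penalty terms in~\eqref{eq:submodular_problem_L_2_R_1} vanish on both $\mathcal{C}^*$ and $\mathcal{C}_g$; both collections thus consist of node-disjoint cycles, each involving at most $L_{\text{eff}}\le L$ resources. Expanding the marginal,
\begin{equation}
\partial_c U^{GT}(\mathcal{C}) = (K - |\mathcal{C}| - 1)\, u_c - \sum_{c' \in \mathcal{C}} u_{c'},
\end{equation}
makes it clear that greedy always picks the feasible cycle with the largest per-instant utility $u_c$; hence $u_{c_1}\ge u_{c_2}\ge\cdots\ge u_{c_{k_g}}$ and $u_{c_1}=\max_c u_c$ globally. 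Two size facts follow from the same expression. Since $u_{c_{k_g+1}}\le u_{c_j}$ for every $j\le k_g$, the stopping condition $(K-k_g-1)\,u_{c_{k_g+1}}\le\sum_{j\le k_g}u_{c_j}$ is automatically triggered once $k_g\ge(K-1)/2$, forcing $k_g\le\lfloor(K-1)/2\rfloor\le K/2$; and $U^{GT}(\mathcal{C}_g)\ge U^{GT}(\{c_1\})=(K-1)u_{c_1}$ because subsequent greedy marginals are non-negative.

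Next I set up a standard charging argument. For each $c^*\in\mathcal{C}^*\setminus\mathcal{C}_g$, let $\tau(c^*)$ be the first greedy step whose selection shares a vertex with $c^*$; at that step $c^*$ was still feasible, so by the greedy rule $u_{c_{\tau(c^*)}}\ge u_{c^*}$. Since the cycles in $\mathcal{C}^*$ are node-disjoint and each greedy cycle touches at most $L_{\text{eff}}$ vertices, every $c\in\mathcal{C}_g$ receives at most $L_{\text{eff}}$ charges, yielding
\begin{equation}
\sum_{c^*\in\mathcal{C}^*} u_{c^*}\le L_{\text{eff}}\sum_{c\in\mathcal{C}_g} u_c,\qquad k^*\le L_{\text{eff}}\, k_g.
\end{equation}
Any OPT cycle that is never blocked is also constrained: greedy would have selected it as the next pick, so its $u$-value satisfies the stopping inequality. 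This sharpens the previous estimate to $\sum_{c^*} u_{c^*}\le L_{\text{eff}}\, u_{c_1}+(k^*-L_{\text{eff}})\,u_{c_1}/(K-k_g-1)$, which I shall need in the more delicate regime.

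The proof would conclude with a case split on whether $L_{\text{eff}}\le K/2$. When $L_{\text{eff}}\le K/2$, combining the sharpened charging bound with $U^{GT}(\mathcal{C}_g)\ge(K-1)u_{c_1}$ leads, after checking that the worst-case configuration is $k^*=L_{\text{eff}}$, to $U^{GT}(\mathcal{C}^*)\le L_{\text{eff}}(K-L_{\text{eff}})\,u_{c_1}$ and hence $\alpha^{-1}=L_{\text{eff}}(K-L_{\text{eff}})/(K-1)$. When $L_{\text{eff}}>K/2$ the charging inequality is slack, so I fall back on the trivial $\sum u^*\le k^*\,u_{c_1}$ together with $k^*\le K-1$ and the concavity of $k^*\mapsto k^*(K-k^*)$, whose maximum $K^2/4$ is achieved at $k^*=K/2$; this gives $\alpha^{-1}=(K/2)(K/2)/(K-1)$. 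The two cases unify into the stated formula $\alpha^{-1}=\min\{L_{\text{eff}},K/2\}\bigl(K-\min\{L_{\text{eff}},K/2\}\bigr)/(K-1)$. The main obstacle is the first case: identifying precisely how many OPT cycles should be charged at the full rate $u_{c_1}$ versus how many are damped by the stopping inequality, and verifying that the worst configuration is indeed $k^*=L_{\text{eff}}$. The submodularity facts and the charging step itself are routine once the marginal formula is isolated.
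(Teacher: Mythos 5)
Your overall strategy is the same one the paper uses: order the greedy cycles by the per-instant utilities $u_{c_1}\ge u_{c_2}\ge\cdots$, charge each optimal cycle to the first greedy cycle that blocks it (at most $L_{\text{eff}}$ node-disjoint optimal cycles per greedy cycle), handle never-blocked optimal cycles through the stopping rule, and finish with the concavity of $d(K-d)$. The bounds $k_g\le\lceil (K-1)/2\rceil$, $U^{GT}(\mathcal{C}_g)\ge(K-1)u_{c_1}$, and your second case ($L_{\text{eff}}>K/2$, via $\sum u_{c^*}\le k^*u_{c_1}$ and $k^*(K-k^*)\le K^2/4$) are all fine.

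The gap is in the main case $L_{\text{eff}}\le K/2$, and it is exactly the step you flag as ``the main obstacle.'' Two things go wrong with the argument as written. First, the ``sharpened'' inequality $\sum_{c^*}u_{c^*}\le L_{\text{eff}}u_{c_1}+(k^*-L_{\text{eff}})u_{c_1}/(K-k_g-1)$ is not justified: the optimal cycles blocked by $c_2,c_3,\dots$ are bounded by $u_{c_2},u_{c_3},\dots$ (not by a damped multiple of $u_{c_1}$), and there can be up to $L_{\text{eff}}k_g$ blocked cycles, not $L_{\text{eff}}$; only the never-blocked ones earn the $1/(K-k_g-1)$ factor. Second, anchoring the greedy side solely to $(K-1)u_{c_1}$ is too weak once $k_g>1$: if each of the $k_g$ greedy cycles blocks $L_{\text{eff}}$ optimal cycles of comparable utility, your chain gives a ratio of order $(K-k^*)L_{\text{eff}}k_g/(K-1)$, which exceeds the target $L_{\text{eff}}(K-L_{\text{eff}})/(K-1)$. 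The repair --- and this is what the paper's proof does via its reduced set $\hat{\mathscr{C}}^G$ and the quantity $d\le L_{\text{eff}}$ --- is to charge each blocked optimal cycle to its blocking greedy cycle \emph{at that cycle's own rate}, giving $\sum_{c^*}u_{c^*}\le L_{\text{eff}}\sum_j u_{c_j}$, to compare against the full $U^{GT}(\mathcal{C}_g)=(K-k_g)\sum_j u_{c_j}$ rather than its first term, and then to verify the purely algebraic inequality $(K-k^*)(K-1)\le (K-L_{\text{eff}})(K-k_g)$ using the relation between $k^*$, $k_g$ and $L_{\text{eff}}$ (it reduces to $(L_{\text{eff}}-1)(k_g-1)\ge 0$ in the balanced configuration). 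Without this per-cycle charging and the worst-case verification over the joint configuration $(k^*,k_g)$, the claimed factor is not established; the paper also needs a preliminary replacement step for optimal cycles that touch no greedy cycle, which you partially cover with the stopping inequality but do not integrate into the final bound.
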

\begin{proof}
See Appendix \ref{app:proof_approximation_factor}.
\end{proof}
\begin{algorithm}[!t]
\caption{Greedy Maximization of $U^{GT}(\mathscr{C})$}
\begin{algorithmic}[1] \label{alg:mixed_maximization_L_2}
\STATE {\bf Initialize}: $\mathscr{C}=\emptyset$. 
\STATE {\bf While} $\exists~\C\in\bar{\mathscr{C}}$ such that $\partial_{\C}U^{GT}(\mathscr{C})>0$
\STATE  ~~~~~~Find $\C^*=\argmax_{\C\in\bar{\mathscr{C}}}\partial_{\C}U^{GT}(\mathscr{C})$
\STATE ~~~~~~$\mathscr{C}\leftarrow\mathscr{C} \cup \mathcal{C}^*$
\STATE {\bf End}
\end{algorithmic}\end{algorithm}

The set $\bar{\mathscr{C}}$ indicates the set of cycles that are not adjacent (share a node) with any of the cycle in $\mathscr{C}$.
Note that
\be\label{eq:finite_diff}
\partial_{\C'}U^{GT}(\mathscr{C})=-\sum_{\C\in\mathscr{C}}u_{\C}+(K-|\mathscr{C}|-1)u_{\C'}
\ee 
so, as long as the number of tests $|\mathscr{C}|$, added in the greedy maximization, is less than the time horizon $K$, we have
\be  
\argmax_{\C\in\mathscr{C}^c}\partial_{\C}U^{GT}(\mathscr{C})=\argmax_{\C\in\mathscr{C}^c}u_{\C}.
\ee 
This relation indicates that, in the greedy procedure, edges are added in decreasing order of utility, respecting the constraint on the nodal degree. Also, from \eqref{eq:finite_diff} it is easy to find that the optimal $|\mathscr{C}|$ will never exceed $\left\lceil\frac{K-1}{2}\right\rceil$.
In the greedy procedure in Algorithm \ref{alg:mixed_maximization_L_2}, there is a constant number of operations per query, which indicates the overall complexity of the algorithm is dominated by the sorting of all possible cycles' utilities. 
In the worst case, sorting $n$ values require $O(n^2)$ operations, thus the complexity will be given by $O\left(\left(\sum_{\ell=1}^{L}\binom{N}{\ell}\right)^2\right)=O\left(N^{2L}\right)$, i.e. polynomial in $N$ and exponential in $L$.
\subsection{Additional applications of the stochastic optimization}
We would like to highlight the analysis for the factor approximation of the greedy strategy transcends the spectrum sensing application discussed in detail in this paper. In fact, group-testing has been applied to a number of disparate contexts to model the outcome of sequential tests.
As long as one has a way to define the per-time utility derived from each test as in \eqref{eq:u_ij_GT_def}, and an overall utility as in \eqref{eq:tot_utility_def}, then our results can be applied.  Classes of problems that could be formulated in a similar way include job scheduling for data centers, design of parity checks for rateless coding, dynamic advertisement (promoting an offer that bundles two products/services together) etc.. Obviously, in all these cases, the statistics of the observations would be radically different. 
\section{Alternative approaches}\label{sec:additional}
\label{sec:CS_ML_estimate}
In the previous sections, we have provided methods that find a low density measurement matrix. As will be apparent in our numerical results, the noise folding phenomenon justifies the use of sparse sensing matrices. They are also ideal when one wants to use belief propagation to the decision problem. However, for the sake of comparison here we look at alternative support recovery methods, which can be applied to any measurement matrix $\mathbf{B}$, and that can be mapped into previous solutions, as the MWC in \cite{mishali2010theory,cohen2014sub}.  
\subsection{ML estimate}
Let us assume that $\kappa$ measurements have been collected, by mixing a set $\A\subseteq \N$ of sub-bands. One could ignore the prior $\omega_i$ and derive the ML estimate for $\varphi$. 
The log-likelihood function is:
\begin{align}
&\log\left(f\left(\boldsymbol{y}|\boldsymbol{\varphi}_{\A}\right)\right)=-\sum_{k=1}^{\kappa}\log\theta[k]+\frac{y[k]}{\theta[k]}\nonumber\\
&\overset{\theta[k] \rightarrow y[k]}{\approx}-\sum_{k=1}^{\kappa} 1+\log y[k]+\frac{1}{2}\left(\frac{y[k]-\theta[k]}{y[k]}\right)^2\label{eq:ML_approx}
\end{align} 
where the linearization corresponds to the Taylor expansion of the likelihood function around the observations mean (recall \eqref{eq:theta_def}-\eqref{eq:y_def}).
A possible approach consists in solving the following LASSO problem:
\be\label{eq:LASSO}
\hat{\boldsymbol{\varphi}}_{\A}=\argmin_{\boldsymbol{\varphi}_{\A}}\|\boldsymbol{\lambda}_{\A}\boldsymbol{\varphi}^T_{\A}\|_1+\frac{1}{2}\|\left(\mathbf{y}-\mathbf{B}(\boldsymbol{\varphi}^T_{\A}+\boldsymbol{n}_{\A}^T)\right)\|_{\mathbf{C}^{-1}}^2
\ee 
with $\mathbf{C}=\text{diag}(\mathbf{y})$ denoting the covariance of the observations, and $\boldsymbol{\lambda}_{\A}$ the vector of weights for the weighted $\ell_1$-norm. The first penalty term in the objective enhances sparsity, while the second term comes from the ML estimate in \eqref{eq:ML_approx}. 
To incorporate the information of the prior beliefs $\omega_i$, one can set $\lambda_i=\gamma_i$ from \eqref{eq:threshold_gamma}, $\forall i\in\A$, to favor the estimates $\varphi_i>0$ for entries with lower thresholds $\gamma_i$. Alternatively, one can also set $\lambda_i=\lambda~\forall i\in\A$. 
Note that, compared to the non-sequential sampling models (i.e. those using a filterbank), the application of the LASSO (see Section \ref{sec:CS_ML_estimate}) in this context is an approximation. 
The random demodulator in \cite{tropp2010beyond} (similar to our scheme in terms of architecture) is an Xampling ADC converter for signals that are sum of harmonics with constant amplitude, i.e. each subspace, in the UoS representation, has finite dimension. {\color{black} This is not the case we are interested in, and approximating our signal as a sum of harmonics would require sampling at a much higher rate than $R_s$.} 
For our {\it multiband} signal model, instead, rather than having observations that are noisy linear combination of a sparse input, the p.d.f. of the samples depends on those same linear combinations.
{\color{black}
\subsection{Covariance estimate}\label{sec:MWC}
A similar approach is to estimate the covariance of the samples, and write the correspondent linear equations system, as derived in \cite{cohen2014sub}, using the analog front-end of the MWC, introduced in \cite{mishali2010theory}.
This leads to write a system 
\begin{equation}\label{eq:linear_system_MWC}
\boldsymbol{z}=\mathbf{B}\left(\boldsymbol{\varphi}+\boldsymbol{n}\right)+\boldsymbol{\epsilon}
\end{equation} 
where the sensing matrix $\mathbf{B}$ ($\mathbf{A}$ in their work) is a $M\times N$ matrix, with $M$ being the number of analog channels and $N$ the number of spectral bands, whose occupancy is desired to be detected.
The authors estimate the covariance vector (diagonal of the covariance matrix) $z_i=\mathds{E}\left[x_i^2[k]\right]$ (where the $x_i$ have been introduced in \eqref{eq.model-expanded}) by taking multiple samples in multiple frames (in their work $K$ samples per frame in $P$ frames).
Note, however, than in their work the sampling frequency per branch, called $f_s$, needs to be larger than the single component bandwidth $R_s$, to justify the approximation of a multi-band signals as sum of harmonics with constant amplitudes over a single frame. Also, in light of this, the different frames considered for the estimate of $\boldsymbol{z}$, cannot be consecutive in time, since the $x_i$'s would be correlated. 
However, for the sake of comparison, to calculate the utility of the scheme in \cite{cohen2014sub}, we will ignore this limitation in our numerical tests (c.f. Section \ref{sec:simulations}) as well as their need of sampling at a faster rate than our method, pretending their scheme can take $\kappa$ (using their notation $\kappa=KP$) independent consecutive measurements, and can do so at the lower rate $R_s$. One can then use \eqref{eq:LASSO} for the recovery of the sparse vector $\boldsymbol{\varphi}$ from the linear system in \eqref{eq:linear_system_MWC}, replacing $\mathbf{C}$ with $\mathbf{I}$, and $\A$ with $\N$, since their scheme mixes the whole spectrum. 
}
\section{Simulation Results}\label{sec:simulations}
In this section, we showcase the ability of our approach to dynamically switch between a DI receiver (scanning receiver) and a GT approach, based on the expected occupancy (the vector of priors $\bomega$), the time available $K$, the minimum SNR threshold $SNR_{\text{min}}=\frac{\varphi_{\text{min}}}{w}$ and the number of resources $N$.
In the context of spectrum sensing (SS case), the parameters $r_i$ and $\rho_i$ can be mapped into a maximization of the overall weighted network throughput (see \cite{ferrari2017utility}): the reward $r_i$ can be proportional to the achievable rate over the channel $i$ in the absence of PU communications, i.e. $r_i\propto\log(1+SNR_{i,S})$  (where the suffix $S$ indicates the secondary communication), whereas the penalty $\rho_i$ can be made proportional to the loss in rate caused to the primary communication, due to the interference added by the secondary. 
For the cognitive radio application, the concept of {\it exploitation} of the resource is tied to the definition of utility function chosen in \eqref{eq:utility_definition}, which is expressed in $bits/s/Hz$\footnote{From \eqref{eq:utility_definition}-\eqref{eq:optimization_problem_intro}, $r_i$'s and $\rho_i$'s can be normalized over the communication bandwidth without altering the optimization.}.
The longer the time available to transmit, the larger is the number of bits that can be transmitted over that band. 
For the other case, i.e. when the reward comes from detecting correctly resources that are busy (for example a RADAR application), it is not immediately clear why the utility would be proportional to the number of remaining time instants.
To interpret this, we model the action upon declaration of $s_i=1$ as a Bernoulli trial which accrues a reward $r_i$ if such action is successful (i.e. the target is actually hit) and this happens with a certain probability $p_i$ for each attempt. 
The number of attempts $T_i$ necessary to hit the target will then be geometrically distributed. 
One can then find that the expected reward is equal to $r_iP(T_i\leq(K-\kappa))=r_i\sum_{k=1}^{K-\kappa}p_i(1-p_i)^{k-1}=r_i(1-(1-p_i)^{K-\kappa})\approx (K-\kappa)r_ip_i$ for small $p_i$, which would motivate having an expected utility that increases linearly with time. 
The $\rho_i$ associated with this case would model an {\it intervention cost}, whose main purpose would be to limit the false alarm rate. 
It is important to highlight, however, that the time dependency in the optimization objective prevents our formulation to return a standard Constant False Alarm Rate (CFAR) detection method. 
Nevertheless, our model can apply to electronic warfare (tentatives of create jamming), wake-up radio, and other problems where the action (and the associated utility) is on the channels that are declared busy. 
For all the figures we refer to $L=2,3$ as the maximum number of resources per test allowed in our greedy procedure in Algorithm \ref{alg:mixed_maximization_L_2}. 
Theoretically, the optimal value for $U^{GT}$ monotonically increases with $L$, since increasing $L$ introduces additional degrees of freedom. 
However, in our simulations we used the greedy solution and, as proved in our Lemma \ref{lemma_approximation_factor}, the approximation factor of the greedy maximization is potentially worse for higher values of $L$, as the following numerical results will show.
 
We indicate with ``Group Testing" the utility obtained with our GT approach. 
The ``MAP Estimator" is the estimator that knows the true values $\varphi_i$, uses the same matrix $\mathbf{B}$ of the GT approach, but then decides on each resource, based on the posterior for $\omega_i$, using belief-propagation. 
\subsubsection{SS case vs RADAR}
Even though, in light of the symmetry in the definition of the threshold $\gamma_i$, one can switch the $r_i$'s and $\rho_i$'s to go from {\color{black} SS case} to {\color{black} R case} and find the same trends, even for the combined tests, to avoid confusion, we highlight the difference in the two scenarios, in the first simulation we present in Fig.\ref{fig:SS_RADAR}.

For the experiment in Fig. \ref{fig:SS_RADAR}, we set $K=30,N=60$ and $r_i=r, \rho_i=\rho ~\text{and} ~\omega_i=\omega, SNR_i=SNR_{\text{min}} (10 dB)~\forall i\in\N$ we have that for  $\omega$ equal to $\frac{\rho}{\rho+r}$ or $\frac{r}{\rho+r}$ for SS case or R case, respectively.
These are the threshold values given in Assumption \ref{ass:no_sensing_no_utility}, to guarantee no resource can give positive utility if not tested.
As we can see, in both scenarios the utility increases with the ratio $\frac{\rho}{r}$, since the prior probability, that favors a positive utility, increases as well. However, the gain for the GT approach over the DI, happens in complementary ranges: when $\frac{\rho}{r}>1$ for spectrum sensing application, and when $\frac{\rho}{r}<1$ in the RADAR problem.
When the penalty increases with respect to the reward, the GT approach for spectrum sensing will be conservative and not transmit in any of the channels in a {\it group} that tested positively;
nevertheless, as the priors $\omega_i$ increase, it is possible to find multiple empty sub-bands with just one test and gain in utility compared to the DI.
For the RADAR application, when the penalty increases with respect to the reward, there is a disadvantage in declaring as busy all the elements in the test, even if the prior $\omega$ decreases. Clearly this limits the benefit of combined tests, whereas when $\frac{\rho}{r}$ decreases, there is a gain since one element, found busy in the pool, guarantees higher reward.
Apart from this asymmetry, both cases show the same trends in utility over number of available resources $N$, and the value of $SNR_{\text{min}}$. Hence, we will only consider the SS case in the next simulations. 
\begin{figure}[t]
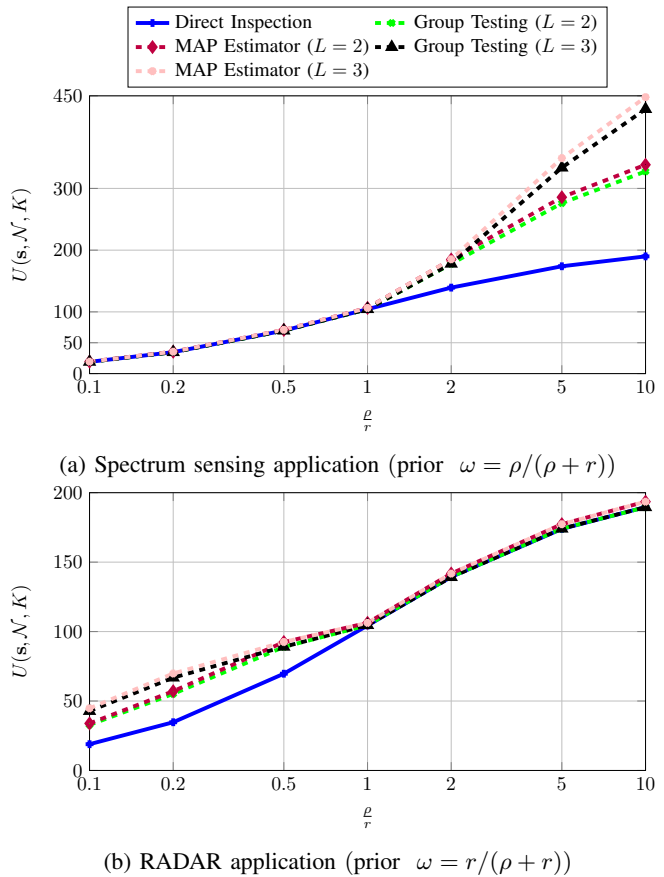

\centering
\begin{subfigure}{\linewidth}
\centering
\includegraphics[width=\linewidth]{Figures/K_30_varrho_vec_empty.tikz}
\caption{Spectrum sensing application $\left(\text{prior}~~\omega={\rho}/({\rho+r})\right)$}
\end{subfigure}
\begin{subfigure}{\linewidth}
\centering
\includegraphics[width=\linewidth]{Figures/K_30_varrho_vec_full.tikz}
\caption{RADAR application $\left(\text{prior}~~\omega={r}/({\rho+r})\right)$}
\end{subfigure}
\caption{Comparing utility for different approaches vs the ratio $\frac{\rho}{r}$ (horizon $K=30$, number of resources $N=60$, $SNR_{\text{min}}=10 dB$). The utility for the RADAR application ({\color{black} R case}) is normalized over the unit measure of $r_i$ and $\rho_i$.}\label{fig:SS_RADAR}
\end{figure}

\subsubsection{Utility for different $N$}
In Fig.\ref{fig:N_vec} we plot the utility (normalized over $K^2$) over the ratio $\frac{K}{N}$ for two different horizons, i.e. $K=10$ and $K=30$ and $SNR_{\text{min}}=10dB$.
 We can see that, only for $\frac{K}{N}\lessapprox 0.75$, the GT approach outperforms all competing options whereas, when the horizon increases, almost no benefit comes from mixing resources. This suggests that there is enough time to test them independently with high accuracy.
For this experiment, we set $\omega_i\sim\left(0.7,\frac{\rho_i}{\rho_i+r_i}\right)$, where $r_i=\log(1+SNR_{i,S})$ and $\rho_i=5r_i$ with $SNR_{{i,S}_{dB}}\sim\mathcal{U}([10,20])$. The $SNR$ for the test,  $\frac{\varphi_i}{n_i}$, is generated uniformly between $10$ and $20$ dB; recall that the only information used in our algorithm is the minimum $SNR$ value, i.e. in this case $10 dB$.
In the regime considered, the DI is approximately constant since it is easy to show $U^{DI,OPT}\leq\frac{K^2}{4}u_{\text{max}}$ irrespective of $N$. 
\subsubsection{Utility for different $SNR_{\text{min}}$}
In this set of experiments we studied how the utility behaves versus the minimum $SNR$ in each active sub-band. In this case the $SNR$ was drawn uniformly between $SNR_{\text{min}_{dB}}$ and $SNR_{\text{min}_{dB}}+10$, and once again only the value of $SNR_{\text{min}}$ was used in the optimization, which is shown in the abscissa of the figures.
Matching our intuition, we can see how the GT approach outperforms the DI only when $SNR_{\text{min}}$ is sufficiently high, and also that the gain in utility is larger for $K=10$ than for $K=30$. In fact, for this experiment the number of resources has been fixed to $N=20$ and, as previously highlighted, increasing $K$ for fixed $N$ diminishes the benefit of combining resources in a test. 
We also plotted the utility obtainable with the approximate ML estimate obtained via Compressive Sensing, described in Section \ref{sec:CS_ML_estimate}. For this case, to illustrate the noise-folding issue, we used a dense matrix that has the same aspect ratio of the one  found via $GT$ approach (i.e. that scans the same set of resources for the same number of tests). 
To show reasonable results, only for the ML estimate via CS, we actually took the mean of $y[k]$ over $10$ samples.
We can see that, despite having more measurements, such approach gives a much lower utility than DI as well as the proposed GT, due to the negative effect of noise folding. 
For $K=30$, we also compared our approach with the performance obtained using belief propagation in a loopy network and an LDPC matrix (see \cite{baron2010bayesian} for details). 
Considering $N=20$ resources, and an expected sparsity approximately equal to $4$, we chose a regular LDPC matrix with a row weight of $5$ ($20/4$ as suggested in \cite{baron2010bayesian}) and a column weight of $3$, resulting in $12$ tests. 
The LDPC has not been implemented for $K=10$, since the regularity constraints would have given either a diagonal matrix (same as DI), or a relatively dense matrix. 
The absence of any optimization in the choice of which and how many resources to test produces a utility which, for low $SNR$, is lower than the DI approach proposed.
For high enough $SNR$, the LDPC design can outperform the DI approach, but still gives a utility lower than our GT strategy with $L=2$.
This highlights the benefit of having an active sub-Nyquist receiver compared to a static offline selection of the parameters. 
\begin{figure}[t]
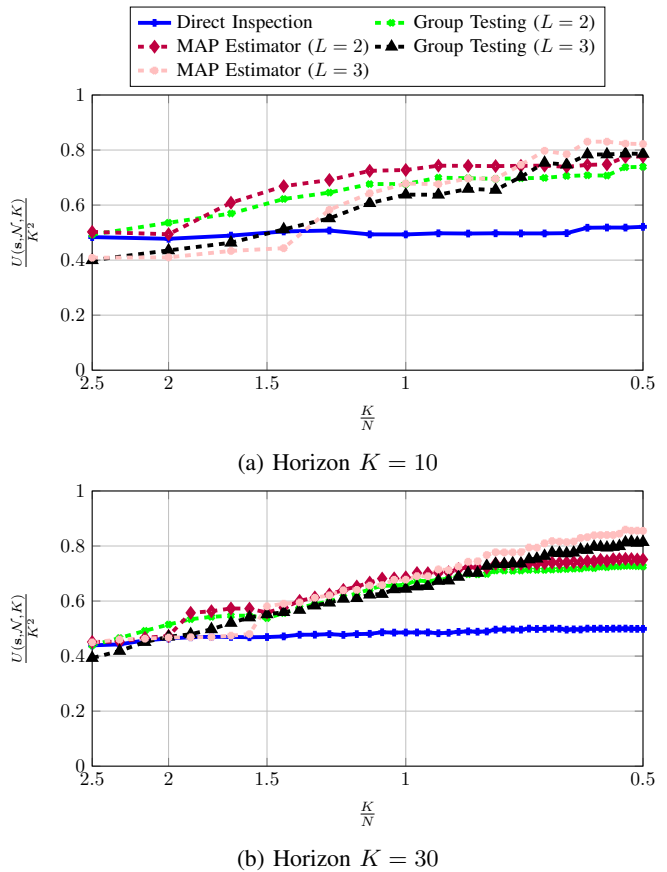

\centering
\begin{subfigure}{\linewidth}
\centering
\includegraphics[width=\linewidth]{Figures/K_10_N_vec_empty.tikz}
\caption{Horizon $K=10$}
\end{subfigure}
\begin{subfigure}{\linewidth}
\centering
\includegraphics[width=\linewidth]{Figures/K_30_N_vec_empty.tikz}
\caption{Horizon $K=30$}
\end{subfigure}
\caption{Comparing utility for different approaches vs the ratio horizon $K$ over the number of resources $N$ for different horizons $K$ and the prior $\omega_i\sim\mathcal{U}\left(0.7,\frac{\rho_i}{r_i+\rho_i}\right)$. The utility on the $y$ axis is normalized by $K^2$ ($SNR_{\text{min}}=10dB$)}\label{fig:N_vec}
\end{figure}
\begin{figure}[t]
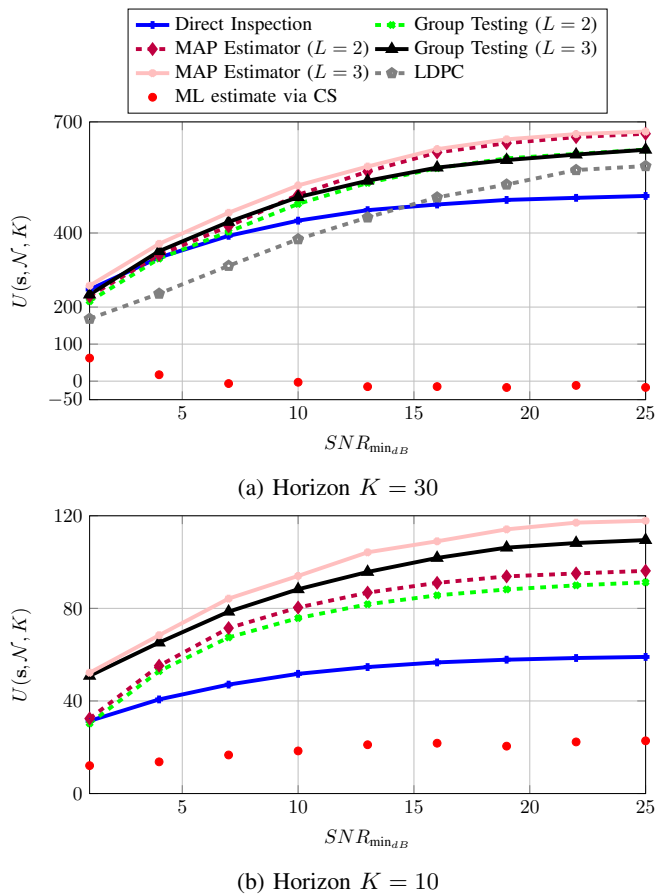

\centering
\begin{subfigure}{\linewidth}
\centering
\includegraphics[width=\linewidth]{Figures/K_30_SNR_vec_empty.tikz}
\caption{Horizon $K=30$}
\end{subfigure}
\begin{subfigure}{\linewidth}
\centering
\includegraphics[width=\linewidth]{Figures/K_10_SNR_vec_empty.tikz}
\caption{Horizon $K=10$}
\end{subfigure}
\caption{Comparing utility for different approaches vs $SNR_{\text{min}}$ for different horizons $K$ and the prior $\omega_i\sim\mathcal{U}\left(0.7,\frac{\rho_i}{\rho_i+r_i}\right)$ (number of resources $N=20$)}\label{fig:SNR_vec}
\end{figure}
{\color{black}

\subsubsection{Detection performance vs MWC}
In this last section, we compare our approach to the performances of another incoherent power spectrum sensing approach proposed in \cite{cohen2014sub}, which was mentioned in Section \ref{sec:MWC}. We considered $150$ independent bands with an expected $5\%$ occupancy, e.g. $\omega_i=0.95~~\forall i\in\N$. For the MWC scheme there are $M=30$ analog channels (in expectation twice the support of the double-sided bandwidth occupancy). We recall that our definition of SNR is the worst case per sub-band when the signal is present, e.g. $SNR=\frac{\varphi_{\text{min}}}{n}$, where we considered for simplicity $n_i=n~\forall i\in\N$, whereas the SNR considered in \cite{cohen2014sub} is $\frac{\sum_{i\in\N}s_i\varphi_i}{Nn}$. As indicated in \cite{cohen2014sub}, for a fixed sensing time it is preferable to choose approximately the same number of frames and samples per frame. This is the case in the simulations in Fig. \ref{fig:MWC_comp}, where $\kappa$ indicates the product of the two quantities $KP$. For the utility parameters, in light of Assumption \ref{ass:no_sensing_no_utility}, we considered $r_i=1$ and $\rho_i=19, ~\forall i\in\N$. We remark however, that the ROC curves in Fig.\ref{fig:MWC_comp} for the MWC power spectrum sensing, do not depend on these parameters: the different points in the curve are obtained by changing $\lambda$ ($\lambda_i=\lambda~\forall i\in\N$) in \eqref{eq:LASSO}. 
Notice that, for the same sampling frequency in each branch, the scheme in \cite{cohen2014sub} collects $M$ times the observations we collect per unit of time, hence the two points for the optimized strategy correspond to: 1) an unfair comparison with our scheme, where we keep $\frac{1}{M}$ observations (indicated with $R_s$), and 2) a fair comparison where we assume our scheme can collect the same number of observations, sampling at $M\cdot{R_s}$, hence obtaining a factor of $M$ SNR gain per test\footnote{We remind the reader that the sampling rate in \cite{cohen2014sub} should actually be higher than $R_s$, as discussed in Section \ref{sec:MWC}. }. We note that at $10~dB$  (Fig.\ref{subfig:MWC_comp_10dB}), the MWC scheme needs approximately $8\cdot M=240$ times more observations to outperform our proposed approach in the unfair comparison, while at $20~dB$ our approach offers better detection performances, even when we let MWC collecting $200\cdot M$ more observations. What is  remarkable is how much higher is the gain of our approach at higher SNR. This is due to the effect of the $SNR$ on the covariance estimate, which is required in the power spectrum sensing algorithm in \cite{cohen2014sub}.
In fact, while extending the number of samples per frame $K$ can mitigate the noise-folding problem, improving the accuracy of the covariance estimate requires a larger number of frames $P$, despite good $SNR$. For instance, for a Gaussian random variable with zero mean and variance $\sigma^2$, it is relatively straightforward to find that the ML estimate for the variance, has itself variance equal to $\frac{2\sigma^4}{P}$ (for $P$ observations). This implies that the LASSO-recovery step does not keep to improve for higher $SNR$, but rather by averaging more, i.e., for higher $P$. 
}
\begin{figure}[t]
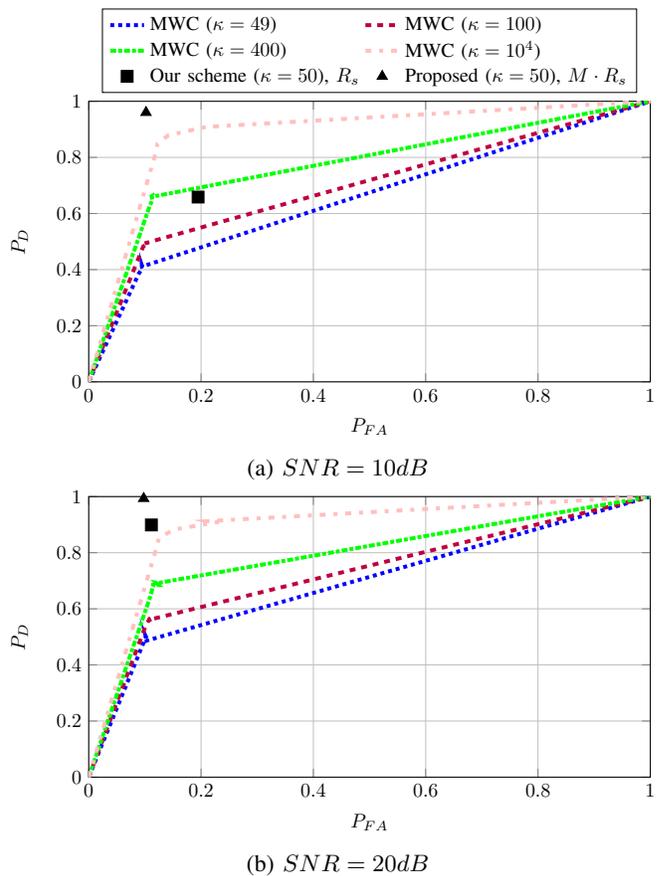

\centering
\begin{subfigure}{\linewidth}
\includegraphics[width=\linewidth]{Figures/MWC_comparison_10dB.tikz}
\caption{$SNR=10dB$}\label{subfig:MWC_comp_10dB}
\end{subfigure}
\begin{subfigure}{\linewidth}
\includegraphics[width=\linewidth]{Figures/MWC_comparison_20dB.tikz}
\caption{$SNR=20dB$}\label{subfig:MWC_comp_20dB}
\end{subfigure}
\caption{{\color{black} ROC curve for the power spectrum sensing algorithm in \cite{cohen2014sub} with MWC front-end, and comparison with the detection performances of our optimized sensing strategy}}\label{fig:MWC_comp}
\end{figure}
\section{Conclusions}
In this work, we proposed a new framework to optimize the performances of an opportunistic spectrum access strategy with sub-Nyquist sampling, and described the connection between such strategy and the design of the front-end sampling architecture. 
For the problem proposed, we characterized the factor approximation of the greedy strategy and showed, via numerical results, how our dynamic design for the sensing matrix guarantees better performances than other static approaches, namely: the linearization of an ML estimate using a dense CS-sensing matrix, Belief Propagation using a regular Low-Density Parity-Check sensing matrix, and the Xampling power spectrum sensing strategy proposed in \cite{cohen2014sub}. 
\begin{appendix}
\begin{figure}
\centering
\includegraphics[width=\linewidth]{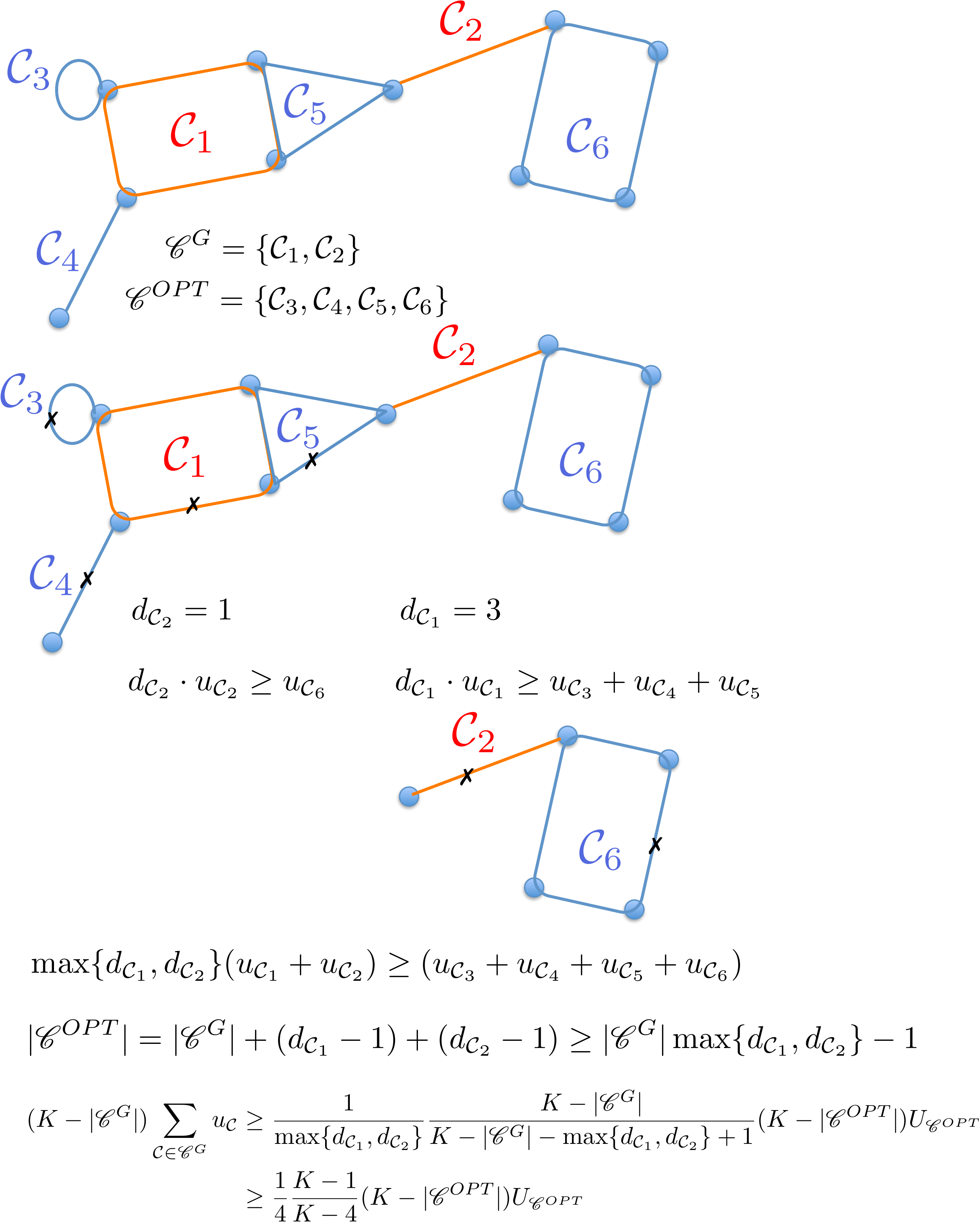}
\caption{Representation of the iterative procedure to obtain the factor approximation of the greedy algorithm}\label{fig:approximation_factor}
\end{figure}
\subsection{Proof of Lemma \ref{lemma:DI_submodular}}\label{app:proof_DI_submodular}
To prove the submodularity of $U^{DI}(\A)$ we show that, to prove the submodularity property: 
\be\label{eq:submodularity_property}
U^{DI}(\A+a)+U^{DI}(\A+b)\geq U^{DI}(\A+a+b)+U^{DI}(\A).
\ee  
is equivalent to prove
\be 
u_a^{DI}+u_b^{DI}\geq 0
\ee  
which is true by assumption on the function $u_i^{DI}$ defined in \eqref{eq:U_i_def}.
Since $u_i^{DI}=0$ for $\alpha_i=1,\beta_i=0$,  we have $u_i^{DI}\geq 0 ~~\forall i\in\N$ for the optimized $\alpha_i^{DI},\beta_{i,\text{max}}^{DI}$.
\subsection{Proof of Lemma \ref{lemma_submodular_mixed}}\label{app:proof_submodular_mixed}
The function is the sum of two terms, to prove the first one is sub-modular one can follow the same steps in Appendix \ref{app:proof_DI_submodular}.
For the second term, it is enough to show that, for any $i$, $-\Upsilon(\deg_{\E}(i))$ is a sub-modular function of $\E$. 
The function is clearly sub-modular since is a concave function of the nodal degree, and from this we can conclude the second term is a positive sum of sub-modular functions, hence sub-modular. 
To prove the equivalence of the two optimizations in \eqref{eq:submodular_problem_L_2_R_1_initial}-\eqref{eq:submodular_problem_L_2_R_1}, we first note that for any $\E$ that satisfies the constraints in \eqref{eq:submodular_problem_L_2_R_1_initial}, the second term of the objective in \eqref{eq:submodular_problem_L_2_R_1} is equal to 0 and the two objectives are equal.
It follows that we simply need to verify that no set of edges, that violate the constraint on the nodal degree, would be the optimal solution for \eqref{eq:submodular_problem_L_2_R_1}. To show this, we note that any infeasible set of edges (according to \eqref{eq:submodular_problem_L_2_R_1_initial}) can be made feasible by removing some edges. For $M$ large enough, i.e. $M>K\max_{ij}u_{ij}$ it is relatively straightforward to verify that such removal of edges would improve the objective, preventing an infeasible solution for \eqref{eq:submodular_problem_L_2_R_1_initial} to be optimal for \eqref{eq:submodular_problem_L_2_R_1}, and this concludes the proof.
\subsection{Proof of Lemma \ref{lemma_approximation_factor}}\label{app:proof_approximation_factor}
We want to prove 
\be\label{eq:what_we_want_to_prove}
U^{GT}\left(\mathscr{C}^G\right)\geq \alpha U^{GT}(\mathscr{C}^{OPT})
\ee 
where $\alpha=\frac{1}{\min\{L_{\text{eff}},\frac{K}{2}\}}\frac{K-1}{K-\min\{L_{\text{eff}},\frac{K}{2}\}}$ and $L_{\text{eff}}\leq L$ is the largest test size returned by the greedy algorithm.
We also rewrite 
\begin{align*}
U^{GT}\left(\mathscr{C}^G\right)&=(K-|\mathscr{C}^G|)U_{\mathscr{C}^G}\\
U^{GT}(\mathscr{C}^{OPT})&=(K-|\mathscr{C}^{OPT}|)U_{\mathscr{C}^{OPT}}
\end{align*}
To prove the claim we look at the graph obtained by the union of the cycles in the optimal and the greedy solution. 
Since in each of the solutions, no node can be in two cycles, it follows that in the obtained graph, no node can be in more than two cycles. 
Let us start by assuming there is a cycle $\C$ with associated utility $u_{\C}$ in the optimal solution that does not share any node with the greedy solution. 
This means 
\begin{align}
&\frac{U_{\mathscr{C}^{OPT}}-u_{\C}}{K-|\mathscr{C}^{OPT}|}\leq u_{\C}\leq \frac{U_{\mathscr{C}^G}}{K-|\mathscr{C}^G|-1}\label{eq:first_condition_no_isolated_nodes}\\
&\rightarrow   U_{\mathscr{C}^{OPT}}\leq (K-|\mathscr{C}^{OPT}|+1)u_{\C}\leq U_{\mathscr{C}^G} \frac{K-|\mathscr{C}^{OPT}|}{K-|\mathscr{C}^G|}\label{eq:second_condition_no_isolated_nodes}
\end{align}
where \eqref{eq:first_condition_no_isolated_nodes} follows from the fact that adding $\C$ to $\mathscr{C}^{OPT}\setminus \{\C\}$ improves the objective, but would not improve the objective for the greedy solution.
From \eqref{eq:second_condition_no_isolated_nodes} we could then conclude $|\mathscr{C}^{G}|>|\mathscr{C}^{OPT}|$, since for $|\mathscr{C}^{G}|\leq |\mathscr{C}^{OPT}|$ we would find from \eqref{eq:second_condition_no_isolated_nodes} that $U^{GT}\left(\mathscr{C}^G\right)\geq U^{GT}(\mathscr{C}^{OPT})$.
This means we can replace a cycle in $\mathscr{C}^G$ with this isolated cycle, to form a set of cycle $\tilde{\mathscr{C}}^G$ whose objective is lower than $\mathscr{C}^G$ by greedy search. In light of \eqref{eq:second_condition_no_isolated_nodes}, we can iterate this process by always picking the cycle to be replaced, in such a way that all the cycles in the optimal solution share at least one node with the set of cycles in $\tilde{\mathscr{C}}^G$.
Now we have that all the cycles in the optimal solution share at least one node with a cycle in $\tilde{\mathscr{C}}^G$. 
If, instead, one has that no cycle $\C$ in the optimal solution is isolated, and that there are isolated cycles in the greedy solution, then the set $\tilde{\mathscr{C}}^G$ is formed by removing these cycles from $\mathscr{C}^G$, lowering the objective (by submodularity and greedy search). One would then again obtain that all the cycles in the optimal solution share at least one node with the set of cycles in $\tilde{\mathscr{C}}^G$.
We now iteratively remove cycles from $\tilde{\mathscr{C}}^G$ and $\mathscr{C}^{OPT}$, while bounding the loss in performance and therefore obtain the factor approximation we want to prove. 
We can remove cycles from $\tilde{\mathscr{C}}^G$ in decreasing order of utility and since we know that for each cycle $\C'$ of length $L$ in $\tilde{\mathscr{C}}^G$ there are at most $L$ different cycles in the optimal solution that share a node with $\C'$, by greedy search we have that $L\cdot u_{\C'}$ is greater than the utility given by the cycles in the optimal solution that are adjacent to cycle $\C'$. 
Let us then define $\hat{\mathscr{C}}^G$ as the minimal subset of $\tilde{\mathscr{C}}^G$ that can cause the removal of all the cycles in the optimal solution when iterating the procedure just described, i.e. the set containing the first $|\hat{\mathscr{C}}^G|$ in decreasing order of utility contained in $\tilde{\mathscr{C}}^G$.
Again, by sub-modularity and greedy search, one can easily find that the objective for $\hat{\mathscr{C}}^G$ is lower than $\tilde{\mathscr{C}}^G$, since if the objective could not be improved by removing a cycle from $\mathscr{C}^G$, then it also cannot improve the objective for $\tilde{\mathscr{C}^G}$, which has utility strictly greater than $\mathscr{C}^G$. 
At this point we can prove \eqref{eq:what_we_want_to_prove} for $\hat{\mathscr{C}}^G$and this will prove it for $\mathscr{C}^G$.
We use $d_{\C}$ to define the number of cycles in $\mathscr{C}^{OPT}$ that can be removed by removing the cycle $\C$ in $\hat{\mathscr{C}}^G$ and $d\triangleq\max\limits_{\C\in\hat{\mathscr{C}}^G}d_{\C}$. 
By iterating our procedure described above, we end up having
\begin{align*}
&(K-|\hat{\mathscr{C}}^G|)U_{\hat{\mathscr{C}}^G}=\frac{1}{d}\frac{K-|\hat{\mathscr{C}}^G|}{K-|\mathscr{C}^{OPT}|}(K-|\mathscr{C}^{OPT}|)d\cdot U_{\hat{\mathscr{C}}^G}\\
&\geq \frac{1}{d}\frac{K-|\hat{\mathscr{C}}^G|}{K-|\hat{\mathscr{C}}^G|-d+1}(K-|\mathscr{C}^{OPT}|)d\cdot U_{\hat{\mathscr{C}}^G}\\
&\geq \frac{1}{d}\frac{K-|\hat{\mathscr{C}}^G|}{K-|\hat{\mathscr{C}}^G|-d+1}(K-|\mathscr{C}^{OPT}|)U_{\mathscr{C}^{OPT}}\\
&\geq \frac{1}{\min\{L_{\text{eff}},\frac{K}{2}\}}\frac{K-1}{K-\min\{L_{\text{eff}},\frac{K}{2}\}}(K-|\mathscr{C}^{OPT}|)U_{\mathscr{C}^{OPT}}\\
\end{align*}
and since $(K-|{\mathscr{C}}^G|)U_{{\mathscr{C}}^G}\geq(K-|\hat{\mathscr{C}}^G|)U_{\hat{\mathscr{C}}^G}$, this concludes the proof.
We have used the fact that $d\leq L_{\text{eff}}$ and that the function $d(K-d)$ has its maximum in $d=\frac{K}{2}$. 
Fig.\ref{fig:approximation_factor} shows an example of the iterative procedure to obtain the bound just derived.
\end{appendix}
\bibliographystyle{IEEEtran}
\bibliography{Opportunistic_Sensing_Bibliography}
\vspace{-1.5cm}
\begin{IEEEbiography}
[{\includegraphics[width=1in,height=1.25in,clip,keepaspectratio]{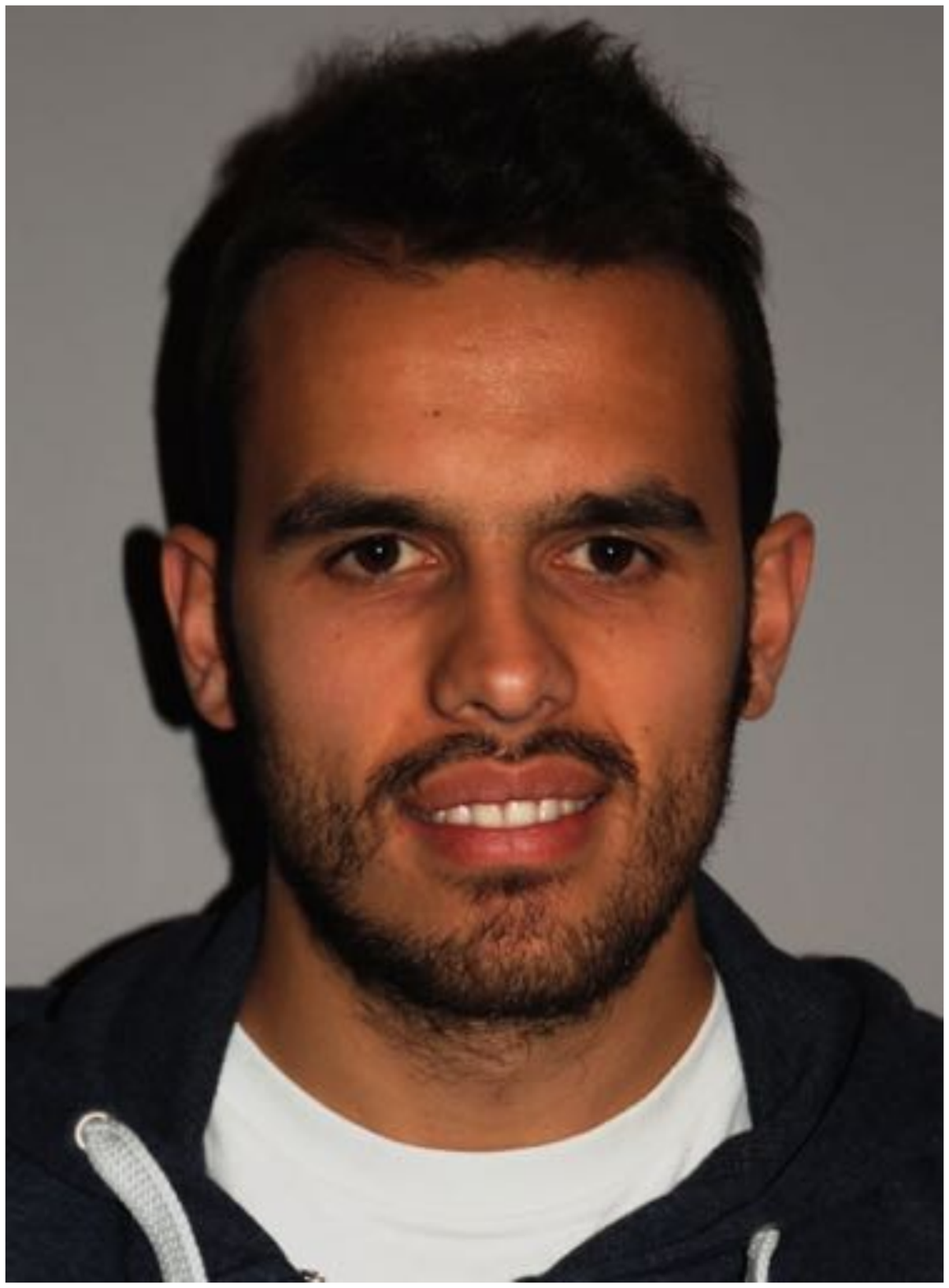}}]
{Lorenzo Ferrari} (S '14) received his PhD in Electrical Engineering at Arizona State University and is now a Senior Engineer at Qualcomm Research.
Prior to that, he received his B.Sc. and M.Sc degree in Electrical Engineering from University of Modena, Italy in 2012 and 2014 respectively. His research interests lie in the broad area of wireless communications and signal processing. He has received the IEEE SmartGridComm 2014 Best Student Paper Award for the paper ``The Pulse Coupled Phasor Measurement Unit''.
\end{IEEEbiography}
\vspace{-1.5cm}
\begin{IEEEbiography}
[{\includegraphics[width=1in,height=1.25in,clip,keepaspectratio]{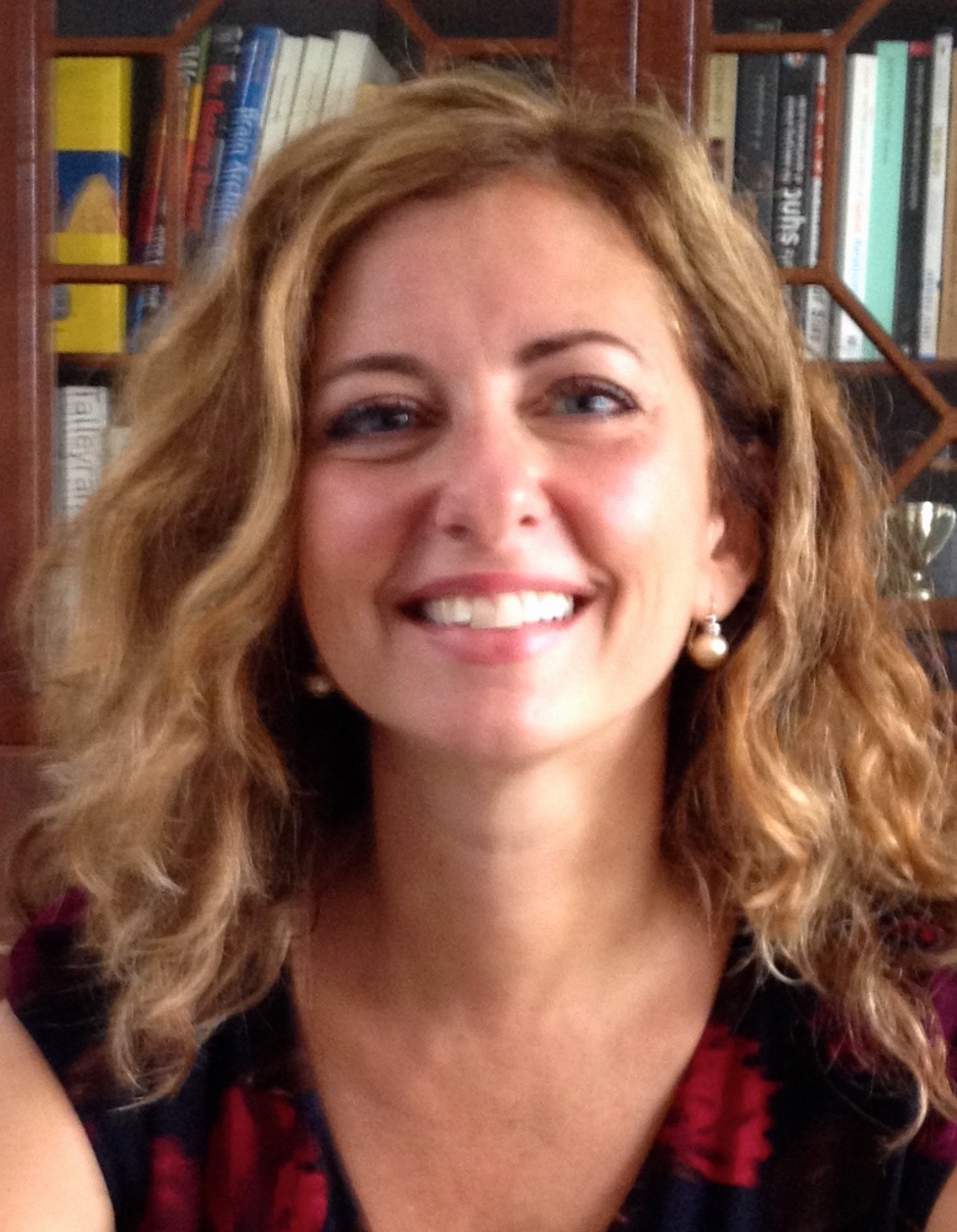}}]
{Anna Scaglione} (F '11) is currently a professor in
electrical and computer engineering at Arizona State
University. She was previously at the University of
California at Davis, Cornell University and University
of New Mexico. Her research focuses on various
applications of signal processing in network and data science
that include intelligent infrastructure for energy delivery and information
systems.
Dr. Scaglione was elected an IEEE fellow in
2011. She received the 2000 IEEE Signal Processing
Transactions Best Paper Award and more recently
was honored for the 2013, IEEE Donald G. Fink Prize Paper Award for the
best review paper in that year in the IEEE publications, her work with her
student earned 2013 IEEE Signal Processing Society Young Author Best Paper
Award (Lin Li). She was EIC of the IEEE Signal Processing Letters and
served in many other capacities the IEEE Signal Processing, IEEE
Communication societies and is currently Associate Editor for the 
IEEE Transactions on Control over Networked Systems.
\end{IEEEbiography}
\end{document}